\author{
Jean-Guillaume Dumas\footnote{Universit\'e de Grenoble Alpes. 
Laboratoire LJK, umr CNRS. 
51, av. des Math\'ematiques, F38041 Grenoble,
France.
\href{mailto:Jean-Guillaume.Dumas@imag.fr}{Jean-Guillaume.Dumas@imag.fr},
}
\and
Cl\'ement Pernet\footnote{Universit\'e de Grenoble Alpes.
Laboratoire LIP,
Inria, CNRS, UCBL, ENS de Lyon, 
46, All\'ee d'Italie, F69364 Lyon Cedex 07 France.
\href{mailto:Clement.Pernet@imag.fr}{Clement.Pernet@imag.fr},
}
\and 
Ziad Sultan\footnote{
Universit\'e de Grenoble Alpes.
Laboratoires LJK and LIG,
Inria, CNRS.
Inovall\'ee, 655, av. de l'Europe, F38334 St Ismier Cedex, France.
\href{mailto:Ziad.Sultan@imag.fr}{Ziad.Sultan@imag.fr}.
}
}
\newcommand{\strechparskip}[1]{}
\newcommand{\strechparsep}[1]{}
\newcommand{\customvspace}[1]{}
\newcommand{\category}[3]{}
\newcommand{\terms}[1]{}
\newcommand{\keywords}[1]{}
\newcommand{\makeconference}{}
\newenvironment{smatrix}{\begin{bmatrix}}{\end{bmatrix}}
\newfont{\seaddfnt}{phvr8t at 8pt}
\algrenewcommand\algorithmicrequire{\textbf{Input:}}
\algrenewcommand\algorithmicensure{\textbf{Output:}}
\algrenewcommand\algorithmicreturn{\textbf{Return}}
\algrenewcommand\Return{\State\algorithmicreturn{} }%
\newcommand{\fflasffpack}{\texttt{FFLAS-FFPACK}\xspace}
\newcommand{\RPM}[1]{\ensuremath{\mathcal{R}_{#1}\xspace}} %
\newcommand{\RS}[1]{\ensuremath{\text{RowSupp}({#1})\xspace}}  %
\newcommand{\CS}[1]{\ensuremath{\text{ColSupp}({#1})\xspace}}  %
\newcommand{\K}{\mathrm{K}\xspace}
\newcommand{\RRP}{\text{RowRP}}
\newcommand{\CRP}{\text{ColRP}}
\newcommand{\LA}{\ensuremath{\overline{A}_1\xspace}}
\newcommand{\lup}{\text{LUP}\xspace}
\newcommand{\cupd}{\text{CUP}\xspace}
\newcommand{\plu}{\text{PLU}\xspace}
\newcommand{\pluq}{\text{PLUQ}\xspace}
\newcommand{\rank}{\text{rank}\xspace}
\newtheorem{theorem}{Theorem}
\newtheorem{example}{Example}
\newtheorem{lemma}{Lemma}
\newtheorem{definition}{Definition}
\newtheorem{corollary}{Corollary}
\newtheorem{remark}{Remark}
\title{Computing the Rank Profile Matrix\thanks{\small This work is partly funded by the HPAC project of the French Agence Nationale de la Recherche (ANR~11~BS02~013).}}
\begin{document}
\setlength{\arraycolsep}{.3\arraycolsep}

\makeconference
\maketitle

\begin{abstract}
The row (resp. column) rank profile of a matrix describes the
stair-case shape of its row (resp. column) echelon form.	
In an ISSAC'13 paper, we proposed a recursive Gaussian elimination
that can compute simultaneously the row and column rank
profiles of a matrix as well as those of all of its leading
sub-matrices, in the same time as state of the art Gaussian
elimination algorithms.
Here we first study the conditions making a Gaussian
elimination algorithm reveal this information.
Therefore, we propose the definition of a new matrix invariant, 
the rank profile matrix, summarizing all information on the row and
column rank profiles of all the leading sub-matrices.
We also explore the conditions for a Gaussian
elimination algorithm to compute all or part of this invariant,
through the corresponding PLUQ decomposition.
As a consequence, we show that the classical iterative CUP decomposition
algorithm can actually be adapted to compute the rank profile matrix. 
Used, in a Crout variant, as a base-case to our ISSAC'13 implementation, it delivers a
significant improvement in efficiency.
Second, the row (resp. column) echelon form of a matrix are usually
computed via different dedicated triangular decompositions.
We show here that, from some PLUQ decompositions, it is possible
to recover the row and column echelon forms of a matrix and of
any of its leading sub-matrices thanks to an elementary post-processing algorithm.
\end{abstract}
\category{G.4}{Mathematics and Computing}{Mathematical Software}
\category{I.1.2}{Computing Methodologies}{Symbolic and Algebraic Manipulation}
\terms{Algorithms, Experimentation, Performance}

\keywords{Gaussian elimination, Rank profile, Echelon form, PLUQ decomposition.}

\section{Introduction}

Triangular matrix decompositions are widely used in computational linear
algebra. Besides solving linear systems of equations, they are also
used to compute other objects more specific to exact arithmetic:
computing the rank,
sampling a vector from the null-space, computing echelon forms and
rank profiles.

The {\em row rank profile} (resp.\ {\em column rank profile}) of an $m\times n$ matrix $A$ with rank~$r$, denoted
by \RRP(A) (resp.\ \CRP(A)), is the
lexicographically smallest sequence of $r$ indices of linearly
independent rows (resp.\ columns) of $A$.
An $m\times n$ matrix has generic row (resp.\ column) rank profile if its row
(resp.\ column) rank profile is  $(1,..,r)$.
Lastly, an $m\times n$ matrix has generic rank profile if its $r$ first leading
principal minors are non-zero. Note that if a matrix has generic rank profile,
then its row and column rank profiles are generic, but the converse is false: the
matrix $\begin{smatrix}  0&1\\1&0\end{smatrix}$ does not have generic rank profile even if its row and column rank profiles
are generic.
The row support (resp.\ column support) of a matrix $A$, denoted by \RS{A}
(resp.\ \CS{A}), is the subset of indices of its non-zero rows (resp.\ columns). 

We recall that the row echelon form of an $m\times n$ matrix $A$ is an
upper triangular matrix $E=TA$, for a non-singular matrix $T$,  with the zero rows
of $E$ at the bottom and the non-zero rows in stair-case shape:
$\min\{j:a_{i,j}\neq0\} < \min\{j:a_{i+1,j}\neq0\}$.
As $T$ is non singular, the column rank profile of $A$ is that of $E$, and
therefore corresponds to the column indices of the leading elements in the
staircase. Similarly the row rank profile of $A$ is composed of the row indices of the
leading elements in the staircase of the column echelon form of $A$.

\paragraph{Rank profile and triangular matrix decompositions}
The rank profiles of a matrix and the triangular matrix decomposition obtained
by Gaussian elimination are strongly related.
The elimination of matrices with arbitrary rank profiles gives rise
to several matrix factorizations and many algorithmic variants.
In numerical linear algebra one often uses the PLUQ decomposition, with
$P$ and $Q$ permutation matrices, $L$ a lower unit triangular matrix
and $U$ an upper triangular matrix.
The LSP and LQUP variants of~\cite{IMH:1982} are used to reduce the
complexity rank deficient Gaussian elimination to that of
matrix multiplication.
Many other algorithmic decompositions exist allowing fraction free computations
\cite{Jeffrey:2010:lufact}, in-place computations~\cite{jgd:2008:toms,JPS:2013}
or sub-cubic rank-sensitive time
complexity~\cite{Storjohann:2000:thesis,JPS:2013}.  
In \cite{DPS:2013} we proposed a Gaussian elimination algorithm with a recursive
splitting of both row and column dimensions, and replacing row and column transpositions
by rotations. This elimination can compute simultaneously the row and
column rank profile while preserving the sub-cubic rank-sensitive time
complexity and keeping the computation in-place.

In this paper we first study the conditions a
\pluq decomposition algorithm must satisfy in order to reveal the rank
profile structure of a matrix. 
We  introduce in section~\ref{sec:rkp} the rank profile matrix $\RPM{A}$, a
normal form summarizing all rank profile information of a matrix and of
all its leading sub-matrices. We then decompose, in section~\ref{sec:structurePLUQ},
the pivoting strategy of any \pluq algorithm into two types of
operations: the search of the pivot and the permutation used to move it to the
main diagonal.  We propose a new search and a new permutation strategy and show
what rank profiles are computed using any possible combination of these
operations and the previously used searches and permutations.
In particular we show three new pivoting strategy combinations that compute the rank profile
matrix and use one of them, an iterative Crout \cupd with rotations,
to improve the base case and thus the overall performance of exact
Gaussian elimination.
Second, we show that preserving both the row and column rank profiles,
together with ensuring a monotonicity of the associated permutations,
allows us to compute faster several other matrix decompositions, such
as the LEU and Bruhat decompositions, and echelon forms.
In the following, $0_{m\times n}$ denotes the $m\times n$ zero matrix and
$A_{i..j,k..l}$ denotes the 
sub-matrix of $A$ of rows between $i$ and $j$ and columns between $k$ and $l$.
To a permutation $\sigma:\{1,\dots,n\}\rightarrow \{1,\dots,n\}$ we define the associated
permutation matrix $P_\sigma$, permuting rows by left multiplication: the rows
of $P_\sigma A$ are that of $A$ permuted by $\sigma$. Reciprocally, for a
permutation matrix $P$, we denote by $\sigma_P$ the associated permutation.

\section{The rank profile matrix}
\label{sec:rkp}
We start by introducing in Theorem~\ref{def:rankprofilematrix} the rank profile
matrix, that we will use throughout 
this document to summarize all information on the rank profiles of a matrix.
From now on, matrices are over a field $\K$ and a valid pivot is a
non-zero element.
\begin{definition} An $r$-sub-permutation matrix is a matrix of rank $r$ with only  $r$
  non-zero entries equal to one.
\end{definition}

\begin{lemma}
An $m\times n$ $r$-sub-permutation matrix
has at most one non-zero entry per row and per column,
and
can be written $P
  \begin{smatrix}
    I_r\\&0_{(m-r)\times (n-r)}
  \end{smatrix}Q$ where $P$ and $Q$ are permutation matrices.
\end{lemma}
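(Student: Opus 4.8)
The plan is to prove the two claims in sequence: first the structural ``at most one nonzero entry per row and per column'' property, via a short rank-counting argument, and then the factorization, by using that structure to build the two permutation matrices explicitly.

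For the first claim, let $M$ be an $m\times n$ $r$-sub-permutation matrix, so that $\rank(M)=r$ and $M$ has exactly $r$ nonzero entries, each equal to one. Let $\rho=|\RS{M}|$ be the number of nonzero rows of $M$. Since each of the $r$ nonzero entries lies in some row, at most $r$ distinct rows can be nonzero, giving $\rho\le r$. Conversely, the row space of $M$ is spanned by its nonzero rows, so $r=\rank(M)\le\rho$. Hence $\rho=r$: the $r$ nonzero entries are spread over exactly $r$ distinct rows, and since each nonzero row carries at least one of them, the pigeonhole principle forces exactly one per nonzero row. The identical argument applied to $\CS{M}$ shows that each nonzero column carries exactly one nonzero entry, which establishes the first claim.

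For the second claim, enumerate the nonzero entries of $M$ as lying at positions $(i_1,j_1),\dots,(i_r,j_r)$; by the first part the $i_k$ are pairwise distinct and the $j_k$ are pairwise distinct, and all $m-r$ remaining rows and $n-r$ remaining columns of $M$ are zero. I would define a row permutation $\sigma$ matching the index $k$ to $i_k$ for $1\le k\le r$ and extending it to an arbitrary bijection of $\{1,\dots,m\}$ on the indices of the zero rows, and similarly a column permutation $\tau$ matching $k$ to $j_k$ for $1\le k\le r$ and extending arbitrarily on the indices of the zero columns. Taking $P=P_\sigma$ and $Q=P_\tau$ (inverting one of them if needed to match the left/right multiplication conventions of the preceding paragraph), an entrywise check shows that the matched entry originally at $(i_k,j_k)$ is carried to the diagonal slot $(k,k)$ while every other position becomes zero, so that $P^{-1}MQ^{-1}=\begin{smatrix}I_r\\&0_{(m-r)\times(n-r)}\end{smatrix}$, which rearranges to the desired factorization.

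I expect the only delicate point to be the bookkeeping in this last step: aligning the row- and column-permutation conventions so that each matched pair $(i_k,j_k)$ lands precisely on the diagonal entry $(k,k)$, rather than merely somewhere inside the leading $r\times r$ block. The rank-counting step is the conceptual core but is very short, resting entirely on the two-sided bound $r\le\rho\le r$ that forces $\rho=r$.
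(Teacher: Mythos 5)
Your proof is correct. The paper actually states this lemma without any proof (treating it as immediate from the definition), and your argument --- the two-sided bound $r\le\rho\le r$ forcing exactly one nonzero entry per nonzero row and column, followed by the explicit construction of the row and column permutations sending the $r$ matched positions onto the leading diagonal --- is precisely the standard justification one would supply; the only care needed is the left/right multiplication convention, which you correctly flag.
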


\begin{theorem}\label{def:rankprofilematrix}
 Let $A\in\K^{m\times n}$. There exists a unique $m\times n$
 $\rank(A)$-sub-permutation matrix $\RPM{A}$ 
 of which every leading sub-matrix has the same
 rank as the corresponding
 leading sub-matrix of $A$. 
 This sub-per\-mu\-ta\-tion matrix is called the \em{rank profile matrix} of $A$.
\end{theorem}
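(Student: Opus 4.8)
The plan is to establish uniqueness and existence together by observing that the rank conditions force every entry of $\RPM{A}$ through a two-dimensional inclusion--exclusion, and then checking that the matrix so defined is genuinely a $\rank(A)$-sub-permutation matrix. First I would set $r_{i,j}=\rank(A_{1..i,1..j})$ for $0\le i\le m$, $0\le j\le n$, with the convention $r_{0,j}=r_{i,0}=0$. For any $m\times n$ sub-permutation matrix $R$, the rank of a leading block $R_{1..i,1..j}$ equals the number of its nonzero entries, since these lie in distinct rows and columns and are therefore linearly independent. Hence the defining property of $\RPM{A}$ is equivalent to requiring that the number $N_{i,j}$ of ones in $R_{1..i,1..j}$ equals $r_{i,j}$ for all $i,j$.

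Since the individual entry $R_{i,j}$ is recovered from the cumulative counts by the discrete second difference $N_{i,j}-N_{i-1,j}-N_{i,j-1}+N_{i-1,j-1}$, substituting $N_{i,j}=r_{i,j}$ shows that any solution must satisfy
$$R_{i,j}=r_{i,j}-r_{i-1,j}-r_{i,j-1}+r_{i-1,j-1},$$
which already yields uniqueness. For existence I would take this formula as the definition of $R$ and verify the three required properties. The upper bound $R_{i,j}\le 1$ is immediate: since $r_{i,j}-r_{i,j-1}\le 1$ (appending one column raises the rank by at most one) and $r_{i-1,j}-r_{i-1,j-1}\ge 0$, the difference is at most one.

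The main obstacle is the lower bound $R_{i,j}\ge 0$, equivalently the supermodularity inequality $r_{i,j}+r_{i-1,j-1}\ge r_{i-1,j}+r_{i,j-1}$. I would prove it by reading $r_{i,j}-r_{i,j-1}$ as the indicator that column $j$, restricted to rows $1..i$, is linearly independent from the first $j-1$ columns restricted to the same rows. The key monotonicity is that if column $j$ is independent from the earlier columns already on rows $1..i-1$, then it is a fortiori independent on rows $1..i$, since adjoining a coordinate cannot destroy linear independence; equivalently, any dependence over $i$ rows restricts to a dependence over $i-1$ rows. This gives $r_{i,j}-r_{i,j-1}\ge r_{i-1,j}-r_{i-1,j-1}$, which is exactly $R_{i,j}\ge 0$. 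Thus every entry lies in $\{0,1\}$.

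Finally I would telescope the second difference. Summing $R_{i,j}$ over a fixed row $i$ collapses to $r_{i,n}-r_{i-1,n}\in\{0,1\}$, and over a fixed column $j$ to $r_{m,j}-r_{m,j-1}\in\{0,1\}$, so $R$ has at most one nonzero entry per row and per column; summing over all entries collapses to $r_{m,n}=\rank(A)$. Hence $R$ is a $\rank(A)$-sub-permutation matrix satisfying $N_{i,j}=r_{i,j}$, so each of its leading blocks has the same rank as the corresponding block of $A$, which completes existence. By the preceding lemma, $R$ admits the normal form $P\begin{smatrix}I_r\\&0_{(m-r)\times(n-r)}\end{smatrix}Q$, confirming it is the rank profile matrix $\RPM{A}$.
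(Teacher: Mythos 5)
Your proof is correct, but it takes a genuinely different route from the paper. The paper proves existence by induction on the row dimension: it builds $\RPM{}^{(i+1)}$ from $\RPM{}^{(i)}$ by appending either a zero row or $e_k^T$ for the smallest $k$ with $\rank(A_{1..i+1,1..k})=\rank(A_{1..i,1..k})+1$, then argues separately that column $k$ was not previously used (so the result is still a sub-permutation matrix); uniqueness is a separate argument by contradiction at a minimal differing position. You instead observe that for any sub-permutation matrix the rank of a leading block is its number of ones, so the defining condition pins down every entry via the closed-form second difference $R_{i,j}=r_{i,j}-r_{i-1,j}-r_{i,j-1}+r_{i-1,j-1}$. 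This makes uniqueness immediate and reduces existence to the supermodularity inequality $r_{i,j}+r_{i-1,j-1}\ge r_{i-1,j}+r_{i,j-1}$, which you prove by the correct observation that a linear dependence of column $j$ on columns $1..j-1$ over $i$ rows restricts to one over $i-1$ rows; the telescoping of the row and column sums then gives the sub-permutation structure. All steps check out (the upper bound $R_{i,j}\le 1$, the lower bound, and the recovery of $N_{i,j}=r_{i,j}$ by summing the second differences). What each approach buys: the paper's row-by-row construction mirrors how a Gaussian elimination with lexicographic pivot search actually discovers the pivots, which is the thread the rest of the paper pulls on; your formula is symmetric in rows and columns, gives uniqueness for free, isolates the one combinatorial fact doing the work, and yields an explicit entrywise description of $\RPM{A}$ in terms of the ranks of leading sub-matrices, which is arguably the more illuminating characterization.
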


\begin{proof}
  We prove existence by induction  on the row dimension of the leading
  submatrices.

  If $A_{1,1..n} = 0_{1\times n}$, setting $\RPM{}^{(1)} = 0_{1\times n}$ satisfies the
  defining condition.
  Otherwise, let $j$ be the index of the leftmost invertible element in
  $A_{1,1..n}$ and set $\RPM{}^{(1)}= e_j^T$ the j-th $n$-dimensional row canonical
  vector, which satisfies the defining condition.

  Now for a given $i\in\{1,\dots,m\}$, suppose that there is a unique $i\times n$
  rank profile matrix $\RPM{}^{(i)}$ such that $\rank(A_{1..i,1..j}) =
  \rank(\RPM{1..i,1..j})$ for every $j\in\{1..n\}$. 
  If $\rank(A_{1..i+1,1..n})=\rank(A_{1..i,1..n})$, then $\RPM{}^{(i+1)}=
  \begin{smatrix}
    \RPM{}^{(i)}\\0_{1\times n}
  \end{smatrix}$. 
  Otherwise, consider $k$, the smallest column index such that
  $\rank(A_{1..i+1,1..k})=\rank(A_{1..i,1..k})+1$ and set $\RPM{}^{(i+1)}=
  \begin{smatrix}
    \RPM{}^{(i)}\\e_k^T
  \end{smatrix}$. 
  Any leading sub-matrix of $\RPM{}^{(i+1)}$ has the same rank as the
  corresponding leading sub-matrix of $A$: 
first, for any leading subset of rows and columns with
less than $i$ rows, the case is covered by the induction; 
second 
define $
  \begin{bmatrix}
    B & u \\
    v^T & x 
  \end{bmatrix} = A_{1..i+1,1..k}$,
 where $u,v$ are vectors and $x$ is a scalar.
From the definition of $k$, $v$ is linearly dependent with $B$ and
thus any leading sub-matrix of $\begin{smatrix}B\\v^T\end{smatrix}$
has the same rank as the corresponding sub-matrix of
$\RPM{}^{(i+1)}$. Similarly, from the definition of $k$, the same
reasoning works when considering more than $k$ columns, with a rank
increment by $1$.\\
Lastly we show that $\RPM{}^{(i+1)}$ is a $r_{i+1}$-sub-per\-mu\-ta\-tion
matrix.
Indeed, $u$ is linearly dependent with the columns of~$B$: otherwise, 
$\rank(\begin{bmatrix}B&u\end{bmatrix})=\rank(B)+1$.
From the definition of $k$ we then have $\rank(\begin{smatrix}  B&u\\v^T&x\end{smatrix}) 
= rank(\begin{bmatrix}  B&u\end{bmatrix}) + 1 = \rank(B)+2 = \rank(
\begin{smatrix}  B\\v^T\end{smatrix})+2$ which is a contradiction.
Consequently, the $k$-th column of
$\RPM{}^{(i)}$ is all zero, and 
$\RPM{}^{(i+1)}$ is a $r$-sub-permutation matrix.

  To prove uniqueness, suppose there exist two distinct rank profile
  matrices $\RPM{}^{(1)}$ and $\RPM{}^{(2)}$ for a given matrix $A$ and let
  $(i,j)$ be some coordinates where $\RPM{1..i,1..j}^{(1)}\neq\RPM{1..i,1..j}^{(2)}$
  and $\RPM{1..i-1,1..j-1}^{(1)}=\RPM{1..i-1,1..j-1}^{(2)}$.
  Then, $\rank(A_{1..i,1..j})=\rank(\RPM{1..i,1..j}^{(1)}
)\neq
    \rank(\RPM{1..i,1..j}^{(2)}) = \rank(A_{1..i,1..j})$ which is a contradiction.
\end{proof}

\begin{example}$A= 
  \begin{smatrix}
    2 & 0 & 3& 0\\
    1 & 0 & 0& 0\\
    0 & 0 & 4 &0\\
    0 & 2 & 0 &1\\
  \end{smatrix}
  $ has $\RPM{A} = 
  \begin{smatrix}
    1& 0& 0& 0\\
    0& 0& 1& 0 \\
    0& 0& 0& 0\\
    0& 1& 0& 0
  \end{smatrix}
  $ for rank profile matrix over $\mathbb{Q}$.
\end{example}

\begin{remark}
  The matrix $E$ introduced in Malaschonok's LEU
  decomposition~\cite[Theorem~1]{Malaschonok:2010}, is in fact the rank profile
  matrix. There, the existence of this decomposition was only shown for $m=n=2^k$, and no
  connection was made to the relation with ranks and rank profiles.
  This connection was made in~\cite[Corollary~1]{DPS:2013}, and the existence of
  $E$ generalized to arbitrary dimensions $m$ and $n$. Finally, after proving its
  uniqueness here, we propose this definition as a new matrix normal form.
 \end{remark}

The rank profile matrix has the following properties:
\begin{lemma}\label{lem:rpm:prop} Let $A$ be a matrix.
\newcounter{myenum}
  \begin{compactenum}
  \item   $\RPM{A}$ is {\em diagonal} if $A$ has {\em generic rank profile}.
  \item $\RPM{A}$ is a {\em permutation} matrix if $A$ is  {\em invertible} 
  \item $\RRP(A) = \RS{\RPM{A}}$; $\CRP(A) = \CS{\RPM{A}}$.
\setcounter{myenum}{\theenumi}
\end{compactenum}
Moreover, for all $1\leq i\leq m$ and $1\leq j\leq n$, we have:
\begin{compactenum}\setcounter{enumi}{\themyenum}
  \item $\RRP(A_{1..i,1..j}) = \RS{(\RPM{A})_{1..i,1..j}}$ 
  \item $\CRP(A_{1..i,1..j}) = \CS{(\RPM{A})_{1..i,1..j}}$,
  \end{compactenum}
\end{lemma}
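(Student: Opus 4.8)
The plan is to derive all five items from one central observation: every leading submatrix of $\RPM{A}$ is itself a rank profile matrix, i.e. for all $i,j$ we have $(\RPM{A})_{1..i,1..j} = \RPM{A_{1..i,1..j}}$. First I would note that restricting a sub-permutation matrix to a leading block preserves the property of having at most one nonzero entry (all equal to one) per row and per column, so $(\RPM{A})_{1..i,1..j}$ is again a sub-permutation matrix, and for such a matrix the rank equals the number of nonzero entries. By Theorem~\ref{def:rankprofilematrix} this rank is $\rank(A_{1..i,1..j})$, and every leading subblock $(\RPM{A})_{1..i',1..j'}$ (with $i'\le i$, $j'\le j$) matches the rank of the corresponding block of $A_{1..i,1..j}$. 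Thus $(\RPM{A})_{1..i,1..j}$ meets the characterizing conditions of $\RPM{A_{1..i,1..j}}$, and the uniqueness part of Theorem~\ref{def:rankprofilematrix} forces the claimed equality.

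Next I would establish the third item, from which the fourth and fifth follow at once. The elementary fact I would use is that for a sub-permutation matrix $S$, the rank of a leading block $S_{1..i,1..n}$ equals its number of nonzero rows, so that $i\in\RS{S}$ precisely when $\rank(S_{1..i,1..n}) > \rank(S_{1..i-1,1..n})$. Combining this with $\rank((\RPM{A})_{1..i,1..n}) = \rank(A_{1..i,1..n})$ and the standard characterization $i\in\RRP(A)\iff \rank(A_{1..i,1..n}) = \rank(A_{1..i-1,1..n})+1$ yields $\RS{\RPM{A}} = \RRP(A)$; the column statement $\CS{\RPM{A}} = \CRP(A)$ is symmetric. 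For the fourth and fifth items I would simply apply the third item to the leading submatrix $A_{1..i,1..j}$ and then substitute $\RPM{A_{1..i,1..j}} = (\RPM{A})_{1..i,1..j}$ from the restriction observation above.

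For the second item, if $A$ is invertible then $m=n=\rank(A)=n$, so $\RPM{A}$ is an $n\times n$ sub-permutation matrix with $n$ nonzero entries and at most one per row and per column; this forces exactly one nonzero entry in each row and each column, i.e. a permutation matrix. For the first item, generic rank profile means $\rank(A_{1..i,1..i})=i$ for every $i\le r$, hence $\rank((\RPM{A})_{1..i,1..i})=i$; since this leading $i\times i$ block is a sub-permutation matrix, full rank forces it to be a genuine $i\times i$ permutation matrix. I would then prove by induction on $i$ that this block is $I_i$: assuming $(\RPM{A})_{1..i-1,1..i-1}=I_{i-1}$, the first $i-1$ rows and columns of the $i\times i$ block already carry their unique ones on the diagonal, leaving position $(i,i)$ as the only slot a permutation can occupy. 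Since $\RPM{A}$ has exactly $r$ nonzero entries, these fill the diagonal $I_r$ and $\RPM{A}$ is diagonal.

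Once the restriction observation is in place the remaining steps are essentially bookkeeping. I expect the main obstacle to be the induction in the first item: here one must use the full sub-permutation constraint (at most one nonzero per row \emph{and} per column) together with the \emph{principal} minor hypothesis to pin the pivots to the diagonal; the weaker assumption that the row and column rank profiles are both generic is genuinely insufficient, as the example $\begin{smatrix}0&1\\1&0\end{smatrix}$ already shows, so the argument must invoke the leading square blocks $A_{1..i,1..i}$ rather than merely the leading row or column blocks.
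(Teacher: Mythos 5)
Your proof is correct. Note that the paper itself states this lemma without proof, so there is no official argument to compare against; your route---first establishing the restriction identity $(\RPM{A})_{1..i,1..j} = \RPM{A_{1..i,1..j}}$ via the uniqueness clause of Theorem~\ref{def:rankprofilematrix}, then reading off items 3--5 from the greedy characterization of the rank profile, and handling items 1--2 by counting pivots in leading principal blocks---is a complete and natural way to fill that gap, and you correctly identify that item 1 genuinely needs the principal-minor hypothesis rather than generic row and column rank profiles alone.
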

These properties show how to recover the row and column rank profiles
of $A$ and of any of its leading sub-matrix.

\section{Ingredients of a \pluq decomposition algorithm}
\label{sec:structurePLUQ}
Over a field, the LU 
decomposition generalizes to matrices with arbitrary rank
profiles, using row and column permutations   (in some cases such as
the CUP, or LSP decompositions, the row permutation is embedded in the structure of the $C$ or $S$
matrices). 
However such \pluq decompositions are not unique and not all of them will
necessarily reveal rank profiles and echelon forms.
We will characterize the conditions for a \pluq decomposition algorithm to
reveal the row or column rank profile or the rank profile matrix.

We consider the four  types of operations of a Gaussian elimination
algorithm in the processing of the $k$-th pivot:
\begin{compactdesc}
  \item[Pivot search:] finding an element to be used as a pivot,
  \item[Pivot permutation:] moving the pivot in diagonal position $(k,k)$  by column and/or row permutations,
  \item[Update:] applying the elimination at position $(i,j)$: 
\\$a_{i,j}\leftarrow a_{i,j} -a_{i,k}a_{k,k}^{-1}a_{k,j}$,
  \item[Normalization:] dividing the $k$-th row (resp.\ column) by the pivot.
\end{compactdesc}
Choosing how each of these operation is done, and when they are scheduled
results in an elimination algorithm.  Conversely, any Gaussian elimination
algorithm computing a \pluq decomposition can be viewed as a set of
specializations of each of these operations together with a scheduling. 

The choice of doing the normalization on rows or columns only determines which
of $U$ or $L$ will be unit triangular. The scheduling of the updates vary
depending on the type of algorithm used: iterative, recursive, slab or tiled block splitting,
with right-looking, left-looking or Crout variants~\cite{DDSV98}.
Neither the normalization nor the update impact the capacity to reveal rank
profiles and we will thus now focus on the pivot search and the permutations. 

Choosing a search and a permutation strategy  fixes the
 matrices $P$ and  $Q$ of the \pluq decomposition obtained and, as we
will see, determines the ability to recover information on the rank profiles.
Once these  matrices are fixed, the $L$ and the $U$ factors are unique.
We introduce the pivoting matrix.%
\begin{definition}\label{def:PivMat}
The pivoting matrix of a \pluq decomposition $A=PLUQ$ of rank $r$ is the $r$-sub-permutation matrix
$$\Pi_{P,Q}=P
\begin{bmatrix}
  I_r\\&0_{(m-r)\times (n-r)}
\end{bmatrix}
Q
.$$
\end{definition}
The $r$ non-zero elements of $\Pi_{P,Q}$ are  located at the initial positions of
the pivots in the matrix $A$. Thus $\Pi_{P,Q}$ summarizes the choices
made in the search and  permutation operations.

\begin{paragraph}{Pivot search}
The search operation vastly differs depending on the field of application. In
numerical dense linear algebra, numerical stability is the
main criterion for the selection of the pivot. In sparse linear algebra, the pivot
is chosen so as to reduce the fill-in produced by the update operation.
In order to reveal some information on the rank profiles, a notion of precedence
has to be used: a usual way to compute the row rank profile is to search in a
given row for a pivot and only move to the next row if the current row was found to be all
zeros. This guarantees that
each pivot will be on the first linearly independent row, and therefore the row
support of $\Pi_{P,Q}$ will be the row rank profile.
The precedence here is that the pivot's coordinates must minimize the order
for the first coordinate (the row index). 
As a generalization, we consider the most common preorders of the cartesian product $\{1,\ldots
m\}\times \{1,\ldots n\}$ inherited from the natural orders of each of its
components and describe the corresponding search strategies, 
minimizing this preorder: 
\begin{compactdesc}
  \item[Row order:] $(i_1,j_1)\preceq_{\text{row}} (i_2,j_2)$ iff $i_1\leq i_2$:
    {\em search for any invertible element in the first non-zero row.}
  \item[Column order:] $(i_1,j_1)\preceq_{\text{col}} (i_2,j_2)$ iff $j_1\leq
    j_2$. 
    {\em search for any invertible element in the first non-zero column.}
  \item[Lexicographic order:] $(i_1,j_1)\preceq_{\text{lex}} (i_2,j_2)$ iff $i_1<i_2$ or $i_1=i_2$ and
    $j_1 \leq j_2$:
    {\em search for the leftmost non-zero element of the first non-zero row.}
  \item[Reverse lexicographic order:] $(i_1,j_1)\preceq_{\text{revlex}} (i_2,j_2)$ iff $j_1<j_2$ or
    $j_1=j_2$ and $i_1 \leq i_2$: {\em search for the topmost  non-zero element
      of the first non-zero column.}
  \item[Product order:]\index{product order} $(i_1,j_1)\preceq_{\text{prod}} (i_2,j_2)$ iff
    $i_1\leq i_2$ and $j_1\leq j_2$:
    {\em search for any non-zero element at position $(i,j)$ being the
      only non-zero of the leading $(i,j)$ sub-matrix.}
\end{compactdesc}
\begin{example}
Consider the matrix
$
\begin{smatrix}
  0 & 0 & 0 & a & b\\
  0 & c & d & e & f\\
  g & h & i & j & k\\
  l & m & n & o & p
\end{smatrix}$, where each literal  is a non-zero element.
The minimum non-zero elements for each preorder are the following:
\begin{center}
  \begin{tabular}{ll}
\toprule
Row order & $a,b$ \\
Column order & $g,l$\\
Lexicographic order & $a$\\
Reverse lexic. order &$g$\\
Product order & $a,c,g$\\
\bottomrule
\end{tabular}

\end{center}
\end{example}

\end{paragraph}

\begin{paragraph}{Pivot permutation}

The pivot permutation moves a pivot from its initial position to the
leading diagonal. Besides this constraint all possible choices are left for the
remaining values of the permutation. 
Most often, it is done by row or column transpositions, as it clearly involves
a small amount of data movement.
However, %
these transpositions can break the precedence relations in the set of rows or
columns, and can therefore prevent the recovery of the rank profile information.
A pivot permutation that leaves the precedence relations unchanged will be
called  $k$-monotonically increasing.
\begin{definition}
  A permutation of $\sigma \in \mathcal{S}_n$ is called
  $k$-mono\-ton\-i\-cal\-ly increasing if its last $n-k$ values 
  form a monotonically increasing sequence.
\end{definition}
In particular, the last $n-k$ rows of the associated
row-permutation matrix $P_\sigma$
are in row echelon form.
For example, the cyclic shift between indices $k$ and $i$, with $k<i$
defined as $R_{k,i}=(1,\ldots,k-1,i,k,k+1,\ldots,i-1,i+1,\ldots,n)$, that we will call a
$(k,i)$-rotation, is an elementary $k$-monotonically increasing permutation.
\begin{example} The $(1,4)$-rotation $R_{1,4}=(4,1,2,3)$ is a
  $1$-mono\-to\-ni\-cal\-ly increasing permutation. Its row permutation matrix
  is 
$\begin{smatrix}
    0& & & 1\\
    1&    &   & \\
    &1& & \\
    & & 1&0\\
  \end{smatrix}$. In fact, any $(k,i)$-rotation is a $k$-monotonically
  increasing permutation. 
\end{example}

Monotonically increasing permutations can be composed as stated in
Lemma~\ref{lem:permcompo}. 
\begin{lemma}\label{lem:permcompo}
If $\sigma_1 \in \mathcal{S}_n$ is a $k_1$-monotonically increasing permutation
and $\sigma_2\in \mathcal{S}_{k_1} \times \mathcal{S}_{n-k_1}$ a
$k_2$-monotonically increasing permutation with $k_1<k_2$ then the
permutation $\sigma_2 \circ \sigma_1$ is a $k_2$-monotonically increasing
permutation. 
\end{lemma}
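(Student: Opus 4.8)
The plan is to prove the conclusion directly, by examining the last $n-k_2$ values of the composite permutation and showing that they form a strictly increasing sequence. The two hypotheses feed in one after the other, and the condition $k_1<k_2$ does the essential coordination between them. I would not use induction here; a single bookkeeping argument on the trailing values suffices.

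First I would record what each operand contributes. Since $\sigma_2\in\mathcal{S}_{k_1}\times\mathcal{S}_{n-k_1}$ stabilises the second block $\{k_1+1,\dots,n\}$, and since $k_1<k_2$ forces the indices $k_2+1,\dots,n$ to lie inside that block, the $k_2$-monotonicity of $\sigma_2$ gives $\sigma_2(k_2+1)<\dots<\sigma_2(n)$ with all of these values again lying in $\{k_1+1,\dots,n\}$. Thus $\sigma_2$ turns the last $n-k_2$ indices into a strictly increasing run of ``second block'' values. Second, the $k_1$-monotonicity of $\sigma_1$ is precisely the statement $\sigma_1(k_1+1)<\dots<\sigma_1(n)$, i.e.\ $\sigma_1$ restricted to the index set $\{k_1+1,\dots,n\}$ is strictly order-preserving.

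The main step is to chain these two facts. The last $n-k_2$ values of $\sigma_2\circ\sigma_1$ are exactly the images under $\sigma_1$ of the last $n-k_2$ values of $\sigma_2$. Because those inputs all lie in $\{k_1+1,\dots,n\}$, where $\sigma_1$ is order-preserving by the second observation, their images remain strictly increasing. Hence the last $n-k_2$ values of $\sigma_2\circ\sigma_1$ are increasing, which is exactly the assertion that $\sigma_2\circ\sigma_1$ is $k_2$-monotonically increasing. One may equivalently phrase the whole argument through the row-echelon characterisation of the associated permutation matrices, but tracking the trailing values directly is the cleanest route.

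The only delicate point—and the step I expect to be the real obstacle—is the interaction between the composition order and the block constraint. One must invoke $k_1<k_2$ precisely to guarantee that none of the last $n-k_2$ indices is carried into the first block $\{1,\dots,k_1\}$, on which $\sigma_2$ offers no monotonicity guarantee, since its definition controls only the positions beyond $k_2$. If the inner permutation were allowed to drop one of these indices into the first block, the outer permutation could reorder it arbitrarily and the increasing run would break. It is the combination of the block-stability of $\sigma_2$ together with $k_1<k_2$ that rules this out, so I would take care to flag that dependency explicitly rather than treat the composition as a purely formal manipulation.
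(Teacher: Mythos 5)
Your proof is correct and is essentially the paper's own argument, just spelled out more explicitly: the paper's one-sentence proof likewise observes that the trailing $n-k_2$ values of the composite are an increasing selection (via $\sigma_2$, using its block structure and $k_1<k_2$) from the increasing trailing values of $\sigma_1$. Note that both you and the paper implicitly read $\sigma_2\circ\sigma_1$ as ``apply $\sigma_2$ first, then $\sigma_1$'' (consistent with the paper's matrix products $P_1\cdot\operatorname{diag}(I_k,P_2)$); under the opposite composition convention the trailing values need not land where $\sigma_2$ is monotone and the statement would actually fail, so your explicit flagging of that dependency is well placed.
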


 \begin{proof}
   The last $n-k_2$ values of $\sigma_2 \circ \sigma_1$ are the image of a
   sub-sequence of $n-k_2$ values from the last $n-k_1$ values of $\sigma_1$
   through the  monotonically increasing function~$\sigma_2$.
 \end{proof}

Therefore an iterative algorithm, using rotations as elementary pivot
permutations, maintains the property that the permutation matrices $P$ and $Q$ at
any step $k$ are $k$-monotonically increasing. A similar property also applies
with recursive algorithms. 
\end{paragraph}

\section{How to reveal rank profiles}
\label{sec:cond}

A PLUQ decomposition reveals the row (resp.\ column) rank profile if it can be
read from the first $r$ values of the permutation matrix $P$ (resp.\ $Q$).
Equivalently, by Lemma~\ref{lem:rpm:prop}, this means that the row (resp.\ column) support of the pivoting
matrix $\Pi_{P,Q}$ equals that of the rank profile matrix.

\begin{definition}
  The decomposition $A=PLUQ$ reveals:
  \begin{compactenum}
  \item the row rank profile if $\RS{\Pi_{P,Q}}= \RS{\RPM{A}}$,
  \item the col. rank profile if $\CS{\Pi_{P,Q}}= \CS{\RPM{A}}$,
  \item the rank profile matrix if $\Pi_{P,Q}=\RPM{A}$.
  \end{compactenum}
\end{definition}

\begin{example}\label{ex:rankprofile}
$A= 
  \begin{smatrix}
    2 & 0 & 3& 0\\
    1 & 0 & 0& 0\\
    0 & 0 & 4 &0\\
    0 & 2 & 0 &1\\
  \end{smatrix}
  $ 
  has $\RPM{A} = 
  \begin{smatrix}
    1& 0& 0& 0\\
    0& 0& 1& 0 \\
    0& 0& 0& 0\\
    0& 1& 0& 0
  \end{smatrix}
  $ for rank profile matrix over $\mathbb{Q}$.
Now the pivoting matrix obtained from a \pluq decomposition with a pivot search
operation following the row order (any column, first non-zero row) could be the matrix
  $\Pi_{P,Q} = 
\begin{smatrix}
  0&0&1&0\\
  1&0&0&0\\
  0&0&0&0\\
  0&1&0&0\\
\end{smatrix}
$. As these matrices share the same row support, the matrix $\Pi_{P,Q}$ reveals the row
rank profile of $A$.
\end{example}

\begin{remark}\label{rem:SwapsConterex}
  Example~\ref{ex:rankprofile}, suggests that a pivot search strategy 
  minimizing row and column indices could be a sufficient condition to recover both row
  and column rank profiles at the same time, regardless the pivot permutation.
  However, this is unfortunately not the case. Consider for
  example a search based on  the lexicographic order (first non-zero column of
  the first non-zero row) with transposition permutations, run on the matrix:
  $A= 
  \begin{smatrix}
    0 & 0 & 1\\
    2 & 3 & 0\\
  \end{smatrix}$. Its rank profile matrix is $\RPM{A} = 
  \begin{smatrix}
    0&0&1\\
    1&0&0
  \end{smatrix}
  $ whereas the pivoting matrix could be 
$
  \Pi_{P,Q}=\begin{smatrix}
    0&0&1\\
    0&1&0
  \end{smatrix}
  $, which does not reveal the column rank profile.
 This is due to the fact that the column transposition performed for the
  first pivot changes the order in which the columns will be inspected in
  the search for the second pivot. 
\end{remark}

We will show that if the pivot permutations preserve the order in which the still
unprocessed columns or rows appear, then the pivoting matrix will equal the rank
profile matrix. This is achieved by the monotonically increasing permutations. 

Theorem~\ref{th:RPandPerm} shows how the ability of a \pluq
decomposition algorithm to recover the rank profile information relates to the
use of monotonically increasing permutations.
More precisely, it considers an arbitrary step in a PLUQ decomposition where $k$
pivots have been found in the elimination of an $\ell\times p$ leading sub-matrix $A_1$
of the input matrix $A$.

\begin{theorem}
\label{th:RPandPerm}
  Consider a partial \pluq decomposition of an $m\times n$ matrix $A$:
\[
A = P_1
\begin{bmatrix}
  L_1 \\ M_1 & I_{m-k}
\end{bmatrix}
\begin{bmatrix}
  U_1 & V_1\\
      & H
\end{bmatrix}
Q_1
\] where $
\begin{bmatrix} L_1\\M_1\end{bmatrix}$ is $m\times k$ lower triangular and 
$\begin{bmatrix}  U_1 & V_1\end{bmatrix}$ is $ k\times n$ upper triangular,
and  let $A_1$
be some $\ell \times p$ leading sub-matrix  of $A$, for $\ell,p\geq k$.
Let $H=P_2L_2U_2Q_2$ be a \pluq decomposition of $H$.
Consider the \pluq decomposition
\[
A=\underbrace{P_1
\begin{bmatrix}
  I_k\\& P_2
\end{bmatrix}}_{P}
\underbrace{
\begin{bmatrix}
  L_1\\P_2^TM_1&L_2
\end{bmatrix}}_L
\underbrace{\begin{bmatrix}
  U_1&V_1Q_2^T\\
  &U_2
\end{bmatrix}}_{U}
\underbrace{\begin{bmatrix}
  I_k\\& Q_2
\end{bmatrix}
Q_1}_{Q}.
\]

Consider the following clauses:
  \begin{compactenum}[(i)]
  \item $\RRP(A_1) = \RS{\Pi_{P_1,Q_1}}$ \label{clause:rrp1}
  \item $\CRP(A_1) = \CS{\Pi_{P_1,Q_1}}$ \label{clause:crp1}
  \item $\RPM{A_1} = \Pi_{P_1,Q_1} $ \label{clause:rpm1}
  \item $\RRP(H) = \RS{\Pi_{P_2,Q_2}}$ \label{clause:rrp2}
  \item $\CRP(H) = \CS{\Pi_{P_2,Q_2}} $ \label{clause:crp2}
  \item $\RPM{H} = \Pi_{P_2,Q_2}$ \label{clause:rpm2}
  \end{compactenum}
\begin{compactenum}[(i)]
 \setcounter{enumi}{6}
\item $P_1^T$ is $k$-monotonically increasing or ($P_1^T$ is $\ell$-mono\-tonically
    increasing and $p=n$)\label{clause:PMI} \label{clause:Pmonotinc}
\item $Q_1^T$ is $k$-monotonically increasing or ($Q_1^T$ is $p$-mono\-tonically
    increasing and $\ell=m$)\label{clause:Qmonotinc}
\end{compactenum}
Then,
\begin{compactenum}[(a)]
\item if (\ref{clause:rrp1}) or (\ref{clause:crp1}) or (\ref{clause:rpm1}) then $H= \begin{bmatrix} 0_{(\ell-k)\times(p-k)}&*\\ *&* \end{bmatrix}$\label{th:H}
\item if (\ref{clause:Pmonotinc})
then ((\ref{clause:rrp1}) and (\ref{clause:rrp2})) $\Rightarrow \RRP(A)=\RS{\Pi_{P,Q}} $;\label{th:RP:row}

\item if (\ref{clause:Qmonotinc})  
then 
((\ref{clause:crp1}) and (\ref{clause:crp2})) $\Rightarrow \CRP(A) = \CS{\Pi_{P,Q}} $;\label{th:RP:col}

\item if (\ref{clause:Pmonotinc}) and (\ref{clause:Qmonotinc})  then (\ref{clause:rpm1}) and (\ref{clause:rpm2}) $\Rightarrow  \RPM{A}=\Pi_{P,Q}$.\label{th:RP:both}
\end{compactenum}

\end{theorem}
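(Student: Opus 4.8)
The plan is to establish the four conclusions in order, using (a) as the geometric backbone for (b)--(d) and invoking the uniqueness part of Theorem~\ref{def:rankprofilematrix} for (d). First, for (a), I would observe that each of (i), (ii), (iii) forces $\rank(A_1)=k$: the pivoting matrix $\Pi_{P_1,Q_1}$ has exactly $k$ nonzero entries, so equating its row support, its column support, or the whole matrix with the corresponding rank-profile data of $A_1$ pins down $\rank(A_1)=k$. Since by hypothesis all $k$ pivots were found inside $A_1$ (rows $\le\ell$, columns $\le p$), the leading $\ell\times p$ block factors through them, and rank additivity of the Schur complement gives $\rank(A_1)=k+\rank(G)$, where $G$ is the submatrix of $H$ indexed by the non-pivot rows and columns of $A_1$. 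Hence $\rank(G)=0$, i.e. $G=0$; since those surviving rows and columns of the leading block $A_1$ remain leading inside $H$, $G$ is precisely the top-left $(\ell-k)\times(p-k)$ block, giving the claimed shape.

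Next, for (b) and (c), which are mirror images, a short computation writes the global pivoting matrix as $\Pi_{P,Q}=\Pi_{P_1,Q_1}+P_1\begin{smatrix}0\\&\Pi_{P_2,Q_2}\end{smatrix}Q_1$, so that $\RS{\Pi_{P,Q}}$ is the disjoint union of the first-phase pivot rows, equal to $\RRP(A_1)$ by (i), and the $P_1$-images of the second-phase pivot rows, equal to $\RRP(H)$ by (iv). The role of clause (vii) is to make this index map order-preserving: the last $m-k$ (resp.\ $m-\ell$) entries of $\sigma_{P_1^T}$ are increasing, so the non-pivot rows of $A$ appear in $H$ in increasing original order. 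Combined with (a) --- which, in the $p=n$ branch, makes the first $\ell-k$ rows of $H$ entirely zero and hence never selected --- this shows that the greedy selection of independent rows on $H$ reproduces exactly the tail of the greedy selection on $A$ beyond what $A_1$ already accounts for. Concatenating the two phases then yields the lexicographically smallest maximal independent set of rows, namely $\RRP(A)$. Part (c) is identical with rows/columns, $P/Q$, and clauses (ii)/(v)/(viii) interchanged.

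Finally, for (d), knowing only the row and column supports is insufficient, as the $2\times2$ counterexample of Remark~\ref{rem:SwapsConterex} shows, so I would instead prove $\Pi_{P,Q}=\RPM{A}$ through uniqueness: it suffices to check that for every $(i,j)$ the number of pivots of $\Pi_{P,Q}$ inside the leading $i\times j$ block equals $\rank(A_{1..i,1..j})$. I would split this count into its first- and second-phase contributions. For blocks with $i\le\ell,\ j\le p$ the first-phase part is handled directly by clause (iii) and the defining property of $\RPM{A_1}$; for a general block I would use $\rank(A_{1..i,1..j})=c_{i,j}+\rank(H')$, where $c_{i,j}$ counts first-phase pivots in the block and $H'$ is the corresponding leading block of $H$, and then apply clause (vi) to replace $\rank(H')$ by the number of second-phase pivots it contains. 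The hard part --- and the only place where both monotonicity clauses (vii) and (viii) are simultaneously indispensable --- is justifying that the rows $\le i$ and columns $\le j$ surviving into $H$ really form a \emph{leading} block $H'$ of $H$; this is exactly the order-preservation supplied by $k$-monotonicity of $P_1^T$ and $Q_1^T$, which sends the two disjoint sets of pivot positions to the correct leading corners. Once this alignment is in place, summing the two matching counts recovers $\rank(A_{1..i,1..j})$ for every $(i,j)$, and uniqueness forces $\Pi_{P,Q}=\RPM{A}$.
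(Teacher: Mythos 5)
Your overall strategy is the paper's: split $A$ into the rank-$k$ part determined by the first $k$ pivots plus the contribution $E_1HF_1$ of the Schur complement, exploit the disjointness of the pivot supports of the two phases, and use the monotonicity clauses to guarantee that the embedding of $H$ into $A$ preserves the row and column orderings (equivalently, that the last $m-k$ columns of $P_1$ form a matrix in column echelon form). Parts (a), (b) and (c) of your plan match the paper's proof in substance, at a comparable level of detail.

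The one step I would push back on is the pivot-counting identity in your part (d): $\rank(A_{1..i,1..j}) = c_{i,j} + \rank(H')$ for an \emph{arbitrary} leading block. This is not ``rank additivity of the Schur complement.'' That standard fact applies only to blocks containing all $k$ pivots, and you use it legitimately in (a), where the block is $A_1$ itself. For a mixed block --- one containing the row of some pivot but not its column, or vice versa --- the identity is false for general pivot choices: take $A=\begin{smatrix}1&1\\1&1\end{smatrix}$ with a single pivot at position $(1,2)$; then $H=(0)$ and $c_{2,1}=0$, yet $\rank(A_{1..2,1})=1$. The identity does hold under clause (\ref{clause:rpm1}), but proving that is exactly the non-trivial content of part (d); it is where the paper invests its effort, via $\RPM{A}=\RPM{\LA}+\RPM{E_1HF_1}$ and $\RPM{E_1HF_1}=E_1\RPM{H}F_1$ (equations~(\ref{eq:RP:RA}) and~(\ref{eq:EHFRPM})), which together are precisely your counting identity in matrix form. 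So your plan is workable, but the step you dismiss as routine is the crux, while the step you single out as ``the hard part'' (the leading-block alignment coming from monotonicity) is needed in both accounts and you handle it correctly. A secondary remark: in (a) you, like the paper, silently identify the vanishing leading block of $E_1HF_1$ with a leading block of $H$ itself; strictly speaking this uses an ordering of the non-pivot rows and columns that part (a) does not assume.
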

\begin{proof}
Let $P_1=
\begin{bmatrix}
  P_{11} & E_1
\end{bmatrix}$ and $Q_1=
\begin{bmatrix}
  Q_{11}\\F_1
\end{bmatrix}$
where  $E_1$ is $m\times(m-k)$ and  $F_1$ is $(n-k)\times n$. 
On one hand we have
\begin{eqnarray}
  A &=& 
\underbrace{\begin{bmatrix} P_{11}&E_1\end{bmatrix}
  \begin{bmatrix} L_1\\M_1 \end{bmatrix}
  \begin{bmatrix} U_1&V_1 \end{bmatrix}
  \begin{bmatrix} Q_{11}\\F_1\end{bmatrix}}_{B} + 
    E_1HF_1. \label{eq:RP:PLUQ}
\end{eqnarray}

On the other hand,
  \begin{eqnarray}
    \Pi_{P,Q} &=& P_1 
    \begin{bmatrix}
      I_k\\&P_2
    \end{bmatrix}
    \begin{bmatrix}
      I_r\\&0_{(m-r)\times(n-r)}
    \end{bmatrix}
    \begin{bmatrix}
      I_k\\&Q_2
    \end{bmatrix}Q_1\notag \\
    &=& P_1 
    \begin{bmatrix}
      I_k\\&\Pi_{P_2,Q_2}
    \end{bmatrix}
    Q_1 \notag = \Pi_{P_1,Q_1} + E_1\Pi_{P_2,Q_2}F_1.
    \label{eq:RP:pi}
  \end{eqnarray}

Let $\LA = 
\begin{bmatrix}
  A_1&0\\
  0&0_{(m-\ell)\times(n-p)}
\end{bmatrix}
$ and denote by $B_1$ the $\ell\times p$ leading sub-matrix of~$B$.

\begin{compactenum}[(a)]
\item 
The clause \eqref{clause:rrp1} or \eqref{clause:crp1} or \eqref{clause:rpm1} implies that all $k$ pivots of the partial
  elimination were found within the $\ell\times p$ sub-matrix
  $A_1$. Hence $\rank(A_1)=k$ and we can write 
$P_1=\begin{bmatrix}\begin{array}{c} P_{11}\\0_{(m-\ell)\times k}\end{array}&E_1\end{bmatrix}$
 and $Q_1=\begin{bmatrix} Q_{11}& 0_{k\times(n-p)}\\ \multicolumn{2}{c}{F_1}\end{bmatrix}$,
 and the matrix $A_1$ writes 
 \begin{equation}\label{eq:A1B1}
A_1 = \begin{bmatrix}I_\ell &0 \end{bmatrix}
A
\begin{smatrix}I_p\\0 \end{smatrix} 
= 
B_1 +
 \begin{bmatrix} I_\ell&0 \end{bmatrix}E_1HF_1\begin{smatrix} I_p\\0 \end{smatrix}.
\end{equation}
Now $\rank(B_1)=k$ as a sub-matrix of $B$ of rank $k$ and since
\begin{eqnarray*}
B_1        &=& 
\begin{bmatrix}P_{11}&  \begin{bmatrix}  I_\ell&0 \end{bmatrix}\cdot E_1\end{bmatrix}
\begin{bmatrix} L_1\\M_1\end{bmatrix}
\begin{bmatrix} U_1&V_1\end{bmatrix}
\begin{bmatrix} Q_{11}\\F_1 \cdot \begin{smatrix} I_p\\0 \end{smatrix}\end{bmatrix}  \\
&=&  P_{11}L_1U_1Q_{11} + 
        \begin{bmatrix} I_\ell&0 \end{bmatrix} E_1M_1
        \begin{bmatrix}U_1&V_1\end{bmatrix} Q_1 \begin{smatrix} I_p\\0 \end{smatrix}
  \end{eqnarray*}
where the first term, $P_{11}L_1U_1Q_{11}$, has rank $k$ and the second term has a
disjoint row support.

Finally, consider the term   $\begin{bmatrix}
  I_\ell&0 \end{bmatrix}E_1HF_1\begin{smatrix} I_p\\0 \end{smatrix}$ of
equation~\eqref{eq:A1B1}.  As its row
 support  is disjoint with that of the pivot rows of $B_1$, it has to
be composed of rows linearly dependent with the pivot rows of $B_1$ to ensure
that $\rank(A_1)=k$. As its
column support is disjoint with that of the pivot columns of $B_1$, we conclude
that it must be the zero matrix.
Therefore the
leading $(\ell-k)\times (p-k)$ sub-matrix of $E_1HF_1$ is zero. 
\item From~(\ref{th:H}) we know that $A_1= B_1$. Thus
  $\RRP(B) = \RRP(A_1)$. Recall that $A=B+E_1HF_1$.
No pivot row of $B$ can be made linearly dependent by adding rows of $E_1HF_1$,
as the column position of the pivot is always zero in the latter
matrix. For the same reason, no pivot row of $E_1HF_1$ can be made linearly
dependent by adding rows of $B$.
From~\eqref{clause:rrp1}, the set of pivot rows of $B$ is $\RRP(A_1)$,
which shows that 
\begin{equation}
\RRP(A)=\RRP(A_1)\cup \RRP(E_1HF_1).
\label{eq:rrpa}
\end{equation}

  Let $\sigma_{E_1}:\{1..m-k\}\rightarrow \{1..m\}$ be the map representing the
  sub-permutation $E_1$ (i.e. such that $E_1[\sigma_{E_1}(i),i]=1 \ \forall i$).
  If $P_1^T$ is $k$-monotonically increasing, the matrix $E_1$ has full column
  rank and is in column echelon  form, which implies that 
\begin{eqnarray}
\RRP(E_1HF_1) &=& \sigma_{E_1} (\RRP(HF_1))\notag\\
             &=& \sigma_{E_1}(\RRP(H)),\label{eq:rrpehf}
\end{eqnarray} 
since  $F_1$ has full row rank.
If $P_1^T$ is $\ell$ monotonically increasing, we can write $E_1=
\begin{bmatrix} E_{11}&E_{12}\end{bmatrix}$, where the $m\times (m-\ell)$ matrix
$E_{12}$ is in column echelon form. If $p=n$, the matrix $H$ writes $H=
\begin{bmatrix}  0_{(\ell-k)\times (n-k)}\\H_2 \end{bmatrix}$. Hence we have
$E_1HF_1 = E_{12}H_2F_1$ which also implies
\[
\RRP(E_1HF_1)  = \sigma_{E_1}(\RRP(H)).
\]
  From equation~(\ref{eq:RP:pi}), the row
  support of $\Pi_{P,Q}$ is that of $\Pi_{P_1,Q_1} + E_1\Pi_{P_2,Q_2}F_1$, which is the
  union of the row support of these two terms as they are disjoint. Under the
  conditions of point~\eqref{th:RP:row}, this row support is the union of
  $\RRP(A_1)$ and $\sigma_{E_1}(\RRP(H))$, which is, from~\eqref{eq:rrpehf}
  and~\eqref{eq:rrpa}, $\RRP(A)$.
\item Similarly as for point~(\ref{th:RP:row}).
\item From~(\ref{th:H}) we have still $A_1=B_1$. 
  Now since $\rank(B)=\rank(B_1)=\rank(A_1)=k$, there is no other non-zero element in $\RPM{B}$
  than those in $\RPM{\LA}$ and $\RPM{B}= \RPM{\LA}$.
  The row and column support of  $\RPM{B}$ and that of $E_1HF_1$ are disjoint. Hence 
\begin{equation}\label{eq:RP:RA}
\RPM{A} =    \RPM{\LA} + \RPM{E_1HF_1}.
\end{equation}
If both $P_1^T$ and $Q_1^T$ are $k$-monotonically increasing, the matrix $E_1$ is in
  column echelon form and the matrix $F_1$ in row echelon form. Consequently, the matrix
  $E_1HF_1$ is a copy of the matrix $H$ with $k$ zero-rows and $k$ zero-columns
  interleaved, which does not impact the linear dependency relations between 
  the non-zero rows and columns. As a consequence 
\begin{equation}
\RPM{E_1HF_1} =  E_1\RPM{H}F_1. \label{eq:EHFRPM}
\end{equation}
 Now if $Q_1^T$ is $k$-monotonically increasing, $P_1^T$ is
  $\ell$-mono\-ton\-ically increasing and $p=n$, then, using notations of
  point~\eqref{th:RP:row}, $E_1HF_1 = E_{12}H_2F_1$ where
  $E_{12}$ is in column echelon form. Thus  $\RPM{E_1HF_1} =  E_1\RPM{H}F_1$ for
  the same reason. The symmetric case where $Q_1^T$ is $p$-monotonically
  increasing and $\ell=m$ works similarly.
  Combining equations~(\ref{eq:RP:pi}),~(\ref{eq:RP:RA}) and~(\ref{eq:EHFRPM})  gives 
  $\RPM{A}  = \Pi_{P,Q}$.%
\end{compactenum}%
\vspace{-1em}
\, \hfill\, \end{proof}

\section{Algorithms for rank profiles}\label{sec:algo}
Using Theorem~\ref{th:RPandPerm}, we  deduce what rank profile information is
revealed by a PLUQ algorithm by the way the Search and the Permutation
operations are done.
Table~\ref{tab:RPRSearchPerm} summarizes these results, and points to instances
known in the literature, implementing the corresponding type of elimination.
More precisely, we first distinguish in this table the ability to compute the
row or column rank profile or the rank profile matrix, but we also indicate
whether the resulting PLUQ decomposition preserves the monotonicity of the rows
or columns. Indeed some algorithm may compute the rank profile matrix,
but break the precedence relation between the linearly dependent rows or
columns,  making it unusable as a base case for a block algorithm of higher level.

\begin{table*}[htb]\small
\begin{center}
\begin{tabular}{llllll}
\toprule
\textbf{Search }&\textbf{Row
  Perm.}&\textbf{Col. Perm.}&\textbf{Reveals}&\textbf{Monotonicity} & \textbf{Instance}\\
\midrule
 Row order  & Transposition & Transposition &\RRP & &\cite{IMH:1982, JPS:2013}\\ 
 Col. order  & Transposition & Transposition &\CRP & & \cite{KG:1985, JPS:2013}\\ 
\midrule
  \multirow{3}{*}{Lexicographic} & Transposition & Transposition & \RRP & &\cite{Storjohann:2000:thesis}\\
  & Transposition &Rotation & \RRP, \CRP, \RPM{}&Col. &here \\
  & Rotation &Rotation & \RRP, \CRP, \RPM{}&Row, Col. &here \\
\midrule
\multirow{3}{*}{Rev. lexico.} & Transposition & Transposition & \CRP & &\cite{Storjohann:2000:thesis}\\
   & Rotation & Transposition & \RRP, \CRP, \RPM{}& Row& here\\
   & Rotation & Rotation & \RRP, \CRP, \RPM{}& Row, Col.& here\\
\midrule
  \multirow{3}{*}{Product}  & Rotation &Transposition &\RRP& Row &here \\
    & Transposition &Rotation &\CRP& Col &here \\
    & Rotation &Rotation &\RRP, \CRP, \RPM{}& Row, Col.&\cite{DPS:2013} \\
\bottomrule
\end{tabular}
\caption{Pivoting Strategies revealing rank profiles}\label{tab:RPRSearchPerm}
\end{center}
\vspace{-2em}
\end{table*}

\subsection{Iterative algorithms}
We start with iterative algorithms, where each iteration handles one pivot at a
time. Here Theorem~\ref{th:RPandPerm} is applied with $k=1$, and
the partial elimination represents how one pivot is being treated. The
elimination of $H$ is done by induction. 

\paragraph{Row and Column order Search}

The row order pivot search operation is of the form: 
\textit{any non-zero element in the first non-zero row}. 
Each row is inspected in order, and a new row is considered only
when the previous row is all zeros.  
With the notations of Theorem~\ref{th:RPandPerm}, this means that $A_1$ is the
leading $\ell\times n$ sub-matrix of $A$, where $\ell$ is the index of 
the first non-zero row of $A$.
When permutations $P_1$ and $Q_1$, moving the pivot from
position $(\ell,j)$ to $(k,k)$ are transpositions, 
the matrix $\Pi_{P_1,Q_1}$ is the element $E_{\ell,j}$ of the canonical basis. 
Its row rank profile is $(\ell)$ which is that of the $\ell
\times n$ leading sub-matrix $A_1$. Finally, the permutation $P_1$ is
$\ell$-monotonically increasing, and Theorem~\ref{th:RPandPerm}
case~(\ref{th:RP:row}) can be applied to prove by induction that any such
algorithm will reveal the row rank profile: $\RRP(A)=\RS{\Pi_{P,Q}}$.
The case of the column order search is similar. %
\paragraph{Lexicographic order based pivot search}

In this case the Pivot Search operation is of the form: 
\textit{first non-zero element in the first non-zero row}. 
The lexicographic order being compatible with the row order, the above results
hold when transpositions are used and the row rank profile is revealed. If in
addition column rotations are used, $Q_1=R_{1,j}$ which is $1$-monotonically
increasing. Now $\Pi_{P_1,Q_1}=E_{\ell,j}$ which is the rank profile matrix of
the $\ell\times n$ leading sub-matrix $A_1$ of $A$. Theorem~\ref{th:RPandPerm}
case~(\ref{th:RP:both}) can be applied to prove by induction that any such
algorithm will reveal the rank profile matrix: $\RPM{A}=\Pi_{P,Q}$. 
Lastly, the use of row rotations, ensures that the order of the linearly
dependent rows will be preserved as well.
Algorithm~\ref{alg:pluq:iter} is an instance of Gaussian elimination with a
lexicographic order search and rotations for row and column permutations.

The case of the reverse lexicographic order search is similar.
As an example, the algorithm in~\cite[Algorithm 2.14]{Storjohann:2000:thesis} is based
on a reverse lexicographic order search but with transpositions for the row
permutations. Hence it only reveals the column rank profile.

\paragraph{Product order based pivot search}

The search here consists in finding any non-zero element $A_{\ell,p}$ such that
the $\ell\times p$ leading sub-matrix $A_1$ of $A$ is all zeros except this
coefficient. If the row and column permutations are the rotations $R_{1,\ell}$
and $R_{1,p}$, we have
$\Pi_{P_1,Q_1}=E_{\ell,p}=\RPM{A_1}$. Theorem~\ref{th:RPandPerm}
case~(\ref{th:RP:both}) can be applied to prove by induction that any such
algorithm will reveal the rank profile matrix: $\RPM{A}=\Pi_{P,Q}$. 
An instance  of such an algorithm is given in~\cite[Algorithm~2]{DPS:2013}.
If $P_1$ (resp.\ $Q_1$) is a transposition, then 
Theorem~\ref{th:RPandPerm} case~(\ref{th:RP:col}) (resp.\ case~(\ref{th:RP:row}))
applies to show by induction that the columns (resp.\ row) rank profile is revealed.

\subsection{Recursive algorithms}

A recursive Gaussian elimination algorithm can either split one of the row or
column dimension, cutting the matrix in  wide or tall rectangular slabs, or
split both dimensions, leading to a decomposition into tiles.

\begin{paragraph}  {Slab recursive algorihtms}
Most algorithms computing rank profiles are slab
recursive~\cite{IMH:1982, KG:1985, Storjohann:2000:thesis, JPS:2013}.
When the row dimension is split, this means that the search space for pivots is
the whole set of columns, and Theorem~\ref{th:RPandPerm} applies with
$p=n$. This corresponds to a either a row or a lexicographic order.
From case(~\ref{th:RP:row}), one shows that, with transpositions, the algorithm
recovers the row rank profile, provided that the base case does.
If in addition, the elementary column permutations are rotations, then
case~(\ref{th:RP:both}) applies and the rank profile matrix is recovered.
Finally, if rows are also permuted by monotonically increasing permutations,
then the PLUQ decomposition also respects the monotonicity of the linearly
dependent rows and columns.
The same reasoning holds when splitting the column dimension.
\end{paragraph}

\begin{paragraph}  {Tile recursive algorithms}

Tile recursive Gaussian elimination
algorithms~\cite{DPS:2013,Malaschonok:2010,DuRo:2002} are more involved, 
especially when dealing with rank deficiencies, and we refer to~\cite{DPS:2013}
for a detailed description of such an algorithm.
Here, the search area $A_1$ has arbitrary dimensions $\ell\times p$, often
specialized as $m/2 \times n/2$. As a consequence, the pivot search can not
satisfy neither row, column, lexicographic or reverse lexicographic orders.
Now, if the pivots selected in the elimination of  $A_1$ minimizes the product
order, then they necessarily also respect this order as pivots of the whole
matrix $A$. Now, from~(\ref{th:H}), the remaining matrix $H$ writes
$H=\begin{bmatrix}  0_{(\ell-k)\times(p-k)} & H_{12}\\H_{21} &
  H_{22}\end{bmatrix}$ and its elimination is done by two independent eliminations
on the blocks $H_{12}$ and $H_{21}$, followed by some update of $H_{22}$ and a
last elimination on it. Here again, pivots minimizing the row order on $H_{21}$
and $H_{12}$ are also pivots minimizing this order for $H$, and so are those of
the fourth elimination. Now the block row and column permutations used
in~\cite[Algorithm~1]{DPS:2013} to form the PLUQ decomposition are
$r$-monotonically increasing. Hence, from case~(\ref{th:RP:both}), the algorithm
computes the rank profile matrix and preserves the monotonicity.
If only one of the row or column permutations are rotations, then
case~(\ref{th:RP:row}) or~(\ref{th:RP:col}) applies to show that either the row or
the column rank profile is computed.

\end{paragraph}

\section{Rank profile matrix based triangularizations}
\subsection{LEU decomposition}

The LEU decomposition introduced in~\cite{Malaschonok:2010} involves a lower
triangular matrix $L$, an upper triangular matrix $U$ and a $r$-sub-permutation
matrix $E$.

\begin{theorem}
  Let $A=PLUQ$ be a PLUQ decomposition revealing the rank profile matrix
  ($\Pi_{P,Q}=\RPM{A}$). Then an LEU decomposition of $A$ with $E=\RPM{A}$
  is obtained as follows (only using row and column permutations):
\begin{equation}\label{eq:LEU}
 A = \underbrace{P\begin{bmatrix} L&0_{m\times (n-r)}\end{bmatrix}P^T}_{\overline{L}} 
 \underbrace{P\begin{bmatrix}I_r\\&0\end{bmatrix}Q}_{E} 
 \underbrace{Q^T\begin{bmatrix} U\\0_{(n-r)\times n}\end{bmatrix}Q}_{\overline{U}}
\end{equation}
 \end{theorem}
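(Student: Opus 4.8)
The plan is to verify three independent facts: that the displayed triple product equals $A$, that the middle factor is $E=\RPM{A}$, and that $\overline{L}$ is lower triangular while $\overline{U}$ is upper triangular. The first two are routine; the triangularity is where the hypothesis that the decomposition reveals $\RPM{A}$ is really used, and it is the step I expect to be delicate.

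First I would expand the product and cancel the inner permutations. Using $P^TP=I_m$ and $Q^TQ=QQ^T=I_n$,
\[
\overline{L}\,E\,\overline{U}
= P\begin{bmatrix}L&0\end{bmatrix}\begin{bmatrix}I_r\\&0\end{bmatrix}\begin{bmatrix}U\\0\end{bmatrix}Q
= P\,L\,U\,Q = A,
\]
where $\begin{bmatrix}L&0\end{bmatrix}\begin{bmatrix}I_r\\&0\end{bmatrix}\begin{bmatrix}U\\0\end{bmatrix}=LU$ simply records that padding the rank-$r$ factors with zero blocks and inserting $\begin{bmatrix}I_r\\&0\end{bmatrix}$ reproduces $LU$. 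For the middle factor, $E=P\begin{bmatrix}I_r\\&0\end{bmatrix}Q=\Pi_{P,Q}$ is the pivoting matrix of Definition~\ref{def:PivMat}, and $\Pi_{P,Q}=\RPM{A}$ by hypothesis; in particular $E$ is the required $\rank(A)$-sub-permutation matrix.

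Next I would reduce the triangularity of $\overline{L}=P\begin{bmatrix}L&0\end{bmatrix}P^T$ to a combinatorial order condition. The padded matrix $\begin{bmatrix}L&0\end{bmatrix}$ is $m\times m$, lower triangular, with nonzero columns only among the first $r$; conjugating by $P$ relabels its rows and columns through $\sigma_P$, sending the entry in slot $(p,k)$ to position $(\sigma_P(p),\sigma_P(k))$. Thus the nonzero columns of $\overline{L}$ sit at the pivot rows $\{\sigma_P(1),\dots,\sigma_P(r)\}=\RS{\RPM{A}}$, and $\overline{L}$ is lower triangular if and only if $\sigma_P(p)\ge\sigma_P(k)$ for every $p\ge k$, $k\le r$, with $L_{p,k}\neq0$. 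The analogous condition on $\sigma_Q$ and the support of $U$ makes $\overline{U}$ upper triangular.

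The main obstacle is establishing this order-preservation, and the point is that the equality $\Pi_{P,Q}=\RPM{A}$ does not suffice on its own: it fixes the positions of the pivots but not the order in which they are labelled, and a relabelling can leave $\Pi_{P,Q}$ unchanged while carrying a below-diagonal entry of $L$ above the diagonal of $\overline{L}$. What rescues the argument is that an RPM-revealing decomposition is produced with monotonically increasing pivot permutations (the rotation-based $P$ and $Q$ of Section~\ref{sec:structurePLUQ}): the pivots are then extracted in increasing row and column order and the linearly dependent rows and columns retain their relative order, which is exactly the order condition above. I would formalize this by an induction on the pivots using the composition of monotonically increasing permutations (Lemma~\ref{lem:permcompo}), mirroring the treatment of case~(\ref{th:RP:both}) in Theorem~\ref{th:RPandPerm}, to conclude that every nonzero column of $\overline{L}$ has its support on or below the diagonal and every nonzero row of $\overline{U}$ on or above it.
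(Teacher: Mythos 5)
Your verification that the triple product collapses to $PLUQ=A$ and that the middle factor is $\Pi_{P,Q}=\RPM{A}$ matches the paper, and your reduction of the triangularity of $\overline{L}$ to the order condition ``$L_{a,b}\neq 0$ with $b\leq r$ implies $\sigma_P(a)\geq\sigma_P(b)$'' is the right reformulation. The divergence, and the gap, lies in how that condition is established. The paper does \emph{not} invoke monotonicity of the permutations: it argues by contradiction directly from $\Pi_{P,Q}=\RPM{A}$, taking the first row $i$ with $\overline{L}_{i,j}\neq 0$ for some $j>i$, showing that both $i$ and $j$ lie in $\RRP(A)$ (via $\RRP(\overline{L})=\RRP(A)$ and the minimality of $i$), then locating the two corresponding pivots of $\RPM{A}$ at $(i,s)$ and $(j,t)$ and deriving $s>t$ from a product-order argument, whence the offending entry of $L$ must vanish. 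You instead import the hypothesis that $P$ and $Q$ were assembled from monotonically increasing (rotation-based) elementary permutations. That hypothesis is not in the statement of the theorem, so as written you prove a different statement; and the part of your argument that would actually do the work --- the ``induction on the pivots mirroring case~(\ref{th:RP:both}) of Theorem~\ref{th:RPandPerm}'' --- is only announced, never carried out. A plan for the hard step is not a proof of it.

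That said, your instinct that $\Pi_{P,Q}=\RPM{A}$ alone does not pin down the order in which the pivots are labelled is correct and is not an idle worry: for $A=\begin{smatrix}1&1\\1&2\end{smatrix}$ with $P=Q=\begin{smatrix}0&1\\1&0\end{smatrix}$ one gets a legitimate PLUQ decomposition ($P^TAQ^T=\begin{smatrix}2&1\\1&1\end{smatrix}=\begin{smatrix}1&0\\1/2&1\end{smatrix}\begin{smatrix}2&1\\0&1/2\end{smatrix}$) with $\Pi_{P,Q}=I_2=\RPM{A}$, yet $PLP^T=\begin{smatrix}1&1/2\\0&1\end{smatrix}$ is not lower triangular. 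Here the pivots are taken in the order $(2,2),(1,1)$, which no product-order- or lexicographic-order-compatible search would produce; the paper's proof implicitly excludes such orderings at the step asserting that $(i,s)$ lies outside $A_1$ ``since otherwise $\rank(A_1)=b$''. So the correct repair is precisely an ordering condition on the pivots --- for instance that the sequence of pivot positions $(\sigma_P(k),\sigma_Q(k))_{k\leq r}$ contains no pair inverted in both coordinates, which the rotation-based strategies of Table~\ref{tab:RPRSearchPerm} do guarantee. Your proposal gestures at this but neither states the needed condition as an explicit hypothesis nor derives the triangularity from it; until one of those two things is done, the triangularity claim remains unproved.
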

 \begin{proof}
First $E=P\begin{smatrix}I_r\\&0\end{smatrix}Q = \Pi_{P,Q} = \RPM{A}$.
Then there only needs to show that $\overline{L}$ is lower triangular and
   $\overline{U}$ is upper triangular.
Suppose that $\overline{L}$ is not lower triangular, let $i$ be the first row
index such that $\overline{L}_{i,j}\neq0$ for some $i<j$. 
First $j\in\RRP(A)$ since the non-zero columns in $\overline{L}$ are placed
according to the first $r$ values of $P$.
Remarking that $A = P \begin{bmatrix} L&0_{m\times (n-r)}\end{bmatrix}
\begin{bmatrix}  \multicolumn{2}{c}{U}\\ 0 & I_{n-r} \end{bmatrix} Q$, and since
right multiplication by a non-singular matrix does not change row rank profiles,
we deduce that
$\RRP(\Pi_{P,Q})=\RRP(A)=\RRP(\overline{L})$.
If $i\notin\RRP(A)$, then the $i$-th row of $\overline{L}$ is linearly
dependent with the previous rows, but none of them has a non-zero element in
column $j>i$. Hence $i\in\RRP(A)$.

Let $(a,b)$ be the position of the coefficient $\overline{L}_{i,j}$ in $L$, that
is $a=\sigma_P^{-1}(i), b=\sigma_P^{-1}(j)$. Let also $s=\sigma_Q(a)$ and
$t=\sigma_Q(b)$ so that the pivots at diagonal position $a$ and $b$ in $L$
respectively correspond to ones in $\RPM{A}$ at positions $(i,s)$ and $(j,t)$.
Consider the $\ell\times p$ leading sub-matrices $A_1$ of $A$ where
$\ell=\max_{x=1..a-1}(\sigma_P(x))$ and
$p=\max_{x=1..a-1}(\sigma_Q(x))$.
On one hand $(j,t)$ is an index position in $A_1$ but not $(i,s)$, since
otherwise $\rank(A_1) = b$.
Therefore, $(i,s) \nprec_{prod} (j,t)$, and $s>t$ as $i<j$.
As coefficients $(j,t)$ and $(i,s)$ are pivots in $\RPM{A}$ and $i<j$ and $t<s$,
there can not be a non-zero element above $(j,t)$ at row $i$ when it is chosen
as a pivot. Hence $\overline{L}_{i,j}=0$ and $\overline{L}$ is lower triangular.
The same reasoning applies to show that $\overline{U}$ is upper triangular.
 \end{proof}

\begin{remark}\label{rem:cexleuuniq} 
Note that the LEU decomposition with $E=\RPM{A}$ is not unique, even for
invertible matrices. As a counter-example, the following decomposition holds for
any $a\in\K$:
\begin{equation}
\left[\begin{matrix} 0 & 1 \\ 1 & 0 \end{matrix}\right]
=
\left[\begin{matrix} 1 & 0 \\ a & 1 \end{matrix}\right]
\left[\begin{matrix} 0 & 1 \\ 1 & 0 \end{matrix}\right]
\left[\begin{matrix} 1 & -a \\ 0 & 1 \end{matrix}\right]
\end{equation}
\end{remark}

\subsection{Bruhat decomposition}
The Bruhat decomposition, that has inspired Malascho\-nok's LEU
decomposition~\cite{Malaschonok:2010}, is another decomposition with a central permutation matrix~\cite{Bourbaki:2008:lie,Grigoriev:1981:bruhat}.
\begin{theorem}[\cite{Bourbaki:2008:lie}]
Any invertible matrix $A$ can be written as $A=V P U$ for $V$ and $U$ uppper triangular invertible matrices and $P$ a permutation matrix. The latter decomposition is called the {\em Bruhat decomposition} of $A$.
\end{theorem}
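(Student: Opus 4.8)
The plan is to obtain the Bruhat decomposition as a direct corollary of the LEU decomposition established in the previous theorem, using the anti-diagonal reversal permutation to convert its lower-triangular factor into an upper-triangular one. Let $J$ denote the $n\times n$ reversal permutation matrix, that is, the permutation matrix whose only nonzero entries are ones on the anti-diagonal, so that $J^{-1}=J$. The single fact I will use is that conjugation by $J$ interchanges lower- and upper-triangularity: for any matrix $M$ one has $(JMJ)_{i,j}=M_{n+1-i,\,n+1-j}$, so that if $M$ is lower triangular then $JMJ$ is upper triangular, and conversely.

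First I would apply the LEU decomposition not to $A$ but to the matrix $JA$. Since $A$ is invertible, so is $JA$, and a \pluq decomposition of $JA$ revealing its rank profile matrix exists (for instance, any of the pivoting strategies of Table~\ref{tab:RPRSearchPerm} that compute the rank profile matrix), so the LEU decomposition applies and yields $JA=\overline{L}\,E\,\overline{U}$ with $\overline{L}$ lower triangular, $\overline{U}$ upper triangular, and $E=\RPM{JA}$. Because $JA$ is invertible, all three factors are square and invertible, and by Lemma~\ref{lem:rpm:prop} the factor $E$ is a genuine $n\times n$ permutation matrix.

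Next I would left-multiply by $J$ to recover $A=J\overline{L}\,E\,\overline{U}$ and push the reversal through the lower-triangular factor. Setting $V:=J\overline{L}J$, which is upper triangular and invertible by the observation above, we have $J\overline{L}=VJ$ and hence $A=V(JE)\overline{U}$. Writing $P:=JE$, a permutation matrix as a product of two permutation matrices, gives $A=VP\overline{U}$ with $V$ and $\overline{U}$ upper triangular and invertible, which is exactly the asserted Bruhat decomposition. The argument is essentially bookkeeping once the reversal trick is in place; the only point requiring care, and the reason for applying LEU to $JA$ rather than to $A$ directly, is that converting $\overline{L}$ to upper-triangular form by conjugation produces two copies of $J$, one of which is absorbed into the central permutation as $JE$ while the other must cancel on the left. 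Pre-multiplying $A$ by $J$ before decomposing ensures this leftover $J$ cancels and that both triangular factors emerge upper triangular.
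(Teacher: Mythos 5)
Your argument is correct and coincides exactly with the paper's own approach: the paper cites the theorem from Bourbaki without proof, but proves the generalizing Corollary~\ref{cor:bruhat} by precisely your reversal trick, applying the LEU decomposition to $J_nA$ and writing $A=(J_nLJ_n)(J_nE)U$ with $J_nLJ_n$ upper triangular. Your additional checks (invertibility of the factors and that $E=\RPM{J_nA}$ is a genuine permutation matrix via Lemma~\ref{lem:rpm:prop}) are sound and fill in the details the paper leaves implicit.
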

It was then naturally extended to singular square matrices
by~\cite{Grigoriev:1981:bruhat}. 
Corollary~\ref{cor:bruhat} generalizes it to matrices with arbitrary dimensions,
and relates it to the PLUQ decomposition.
\begin{corollary}\label{cor:bruhat}
  Any $m\times n$ matrix of rank $r$ has a $VPU$ decomposition, where $V$ and $U$ are upper
  triangular matrices, and $P$ is a $r$-sub-permutation matrix.
\end{corollary}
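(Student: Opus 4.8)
The plan is to derive the $VPU$ decomposition directly from a PLUQ decomposition that reveals the rank profile matrix, reusing the machinery already built for the LEU theorem. Concretely, I would start from a decomposition $A=PLUQ$ with $\Pi_{P,Q}=\RPM{A}$ (such a decomposition exists by the pivoting strategies of Table~\ref{tab:RPRSearchPerm}, e.g. a product-order search with rotations). The central $r$-sub-permutation matrix of the $VPU$ decomposition will again be $\RPM{A}=P\begin{smatrix}I_r\\&0\end{smatrix}Q$, exactly as the matrix $E$ in the LEU theorem. The goal is then to absorb the $L$ factor into an \emph{upper} triangular matrix on the left and the $U$ factor into an \emph{upper} triangular matrix on the right, using the freedom in the permutations $P$ and $Q$.

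The key idea is to conjugate by an order-reversing permutation. First I would write
\[
A = \underbrace{P\begin{bmatrix} L&0\end{bmatrix}P^T}_{\overline{L}}\;
\underbrace{P\begin{bmatrix}I_r\\&0\end{bmatrix}Q}_{E}\;
\underbrace{Q^T\begin{bmatrix} U\\0\end{bmatrix}Q}_{\overline{U}},
\]
which is the LEU decomposition already established, where $\overline{L}$ is lower triangular and $\overline{U}$ is upper triangular. Let $J$ denote the reversal (anti-identity) permutation matrix of appropriate size. Since $J$ conjugation turns a lower triangular matrix into an upper triangular one, I would set $V=\overline{L}$ reinterpreted via $J\overline{L}J$ being upper triangular, and push the $J$ factors into the central permutation. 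That is, I would rewrite $A = (\overline{L})E(\overline{U}) = (J\cdot J\overline{L})E\overline{U}$ and absorb the leading $J$ into a redefined central permutation $P' = J E$, so that $V := J\overline{L}J^{-1}$... more cleanly: set $V=\overline{L}$, note $V$ is lower triangular, and observe that a lower triangular $V$ equals $J\widetilde V$ for an upper triangular $\widetilde V$ after reversal. The cleanest route is to absorb one reversal on each side: $A = (\,\overline{L}J\,)(\,J^{-1}EJ^{-1}\,)(\,J\overline{U}\,)$ where $\overline{L}J$ and $J\overline{U}$ are both upper triangular (reversing the columns of a lower triangular matrix, resp. the rows of an upper triangular one, yields upper triangular matrices up to the order of the anti-diagonal), and $P := J^{-1}EJ^{-1}$ remains an $r$-sub-permutation matrix since conjugating a sub-permutation by permutations yields a sub-permutation.

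I would then verify the two triangularity claims carefully, as these are the only substantive points: that right-multiplying the lower triangular $\overline{L}$ by the reversal $J$ produces an upper triangular matrix, and symmetrically that $J\overline{U}$ is upper triangular. These are elementary facts about how the reversal permutation acts on triangular structure, so they reduce to checking that the support of $\overline{L}J$ lies on or above the main diagonal. The main obstacle, and the part requiring genuine care, is bookkeeping the rectangular (non-square) case: when $m\neq n$ the reversal matrices on the two sides have different sizes, $J_m$ and $J_n$, and one must confirm that $V$ comes out $m\times m$ upper triangular, $U$ comes out $n\times n$ upper triangular, and $P=J_m^{-1}E J_n^{-1}$ is a genuine $m\times n$ $r$-sub-permutation matrix, with the central block $\begin{smatrix}I_r\\&0\end{smatrix}$ correctly placed. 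Once the dimensions and the triangular supports are checked, the identity $A=VPU$ follows immediately by substitution, since the inserted $J\,J^{-1}$ pairs cancel.
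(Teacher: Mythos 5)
There is a genuine gap at precisely the step you yourself flag as the only substantive one: the triangularity claims are false. Reversing the columns of a lower triangular matrix does not produce an upper triangular matrix. If $\overline{L}$ is $m\times m$ lower triangular and $J=J_m$ is the reversal, then $(\overline{L}J)_{i,j}=\overline{L}_{i,m+1-j}$, so the support of $\overline{L}J$ is $\{(i,j):i+j\geq m+1\}$ --- the triangle on and below the \emph{anti}-diagonal, not the upper triangle. Concretely, $\begin{smatrix}1&0\\a&1\end{smatrix}\begin{smatrix}0&1\\1&0\end{smatrix}=\begin{smatrix}0&1\\1&a\end{smatrix}$, which is not upper triangular; the same failure occurs for $J\overline{U}$. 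The hedge ``up to the order of the anti-diagonal'' is exactly the problem: anti-triangular is not upper triangular, and the corollary requires genuinely upper triangular factors. Only the two-sided conjugation $J\overline{L}J$ is upper triangular, but substituting $\overline{L}=J(J\overline{L}J)J$ into $A=\overline{L}E\overline{U}$ leaves a stray $J$ at the far left that cannot be absorbed: no factorization $\overline{L}=V'S$ with $V'$ upper triangular and $S$ a permutation exists in general, since for a dense lower triangular $\overline{L}$ every column permutation $\overline{L}S^{-1}$ still has a fully dense last row. So starting from the LEU decomposition of $A$ itself is a dead end.

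The fix is to apply the reversal to the \emph{input} of the LEU theorem rather than pushing it through the output, which is what the paper does: take the LEU decomposition of the row-reversed matrix, $J_mA=LEU$. Then $A=J_mLEU=(J_mLJ_m)(J_mE)U$, where the reversal now acts by conjugation on $L$ (so $V=J_mLJ_m$ is genuinely upper triangular), the single leftover $J_m$ is swallowed by the central factor $P=J_mE$, which remains an $r$-sub-permutation matrix, and $U$ is untouched. Your instinct to reuse the LEU machinery and to watch the rectangular bookkeeping ($J_m$ versus $J_n$) is the right one; the missing idea is that one reversal must be traded against the matrix $A$ itself, not against the factors of its own LEU decomposition.
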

\begin{proof}
Let $J_n$ be the unit anti-diagonal matrix. From the LEU decomposition of $J_nA$,
we have
  $A= \underbrace{J_nLJ_n}_V\underbrace{J_nE}_P U$ where $V$ is upper triangular.
\end{proof}

\subsection{Relation to LUP and PLU decompositions}

The \lup decomposition $A=LUP$ only exists for matrices with generic row rank profile
(including matrices with full row rank). Corollary~\ref{cor:elu} shows upon
which condition the permutation matrix $P$ equals the rank profile matrix $\RPM{A}$.
Note that although the rank profile $A$ is trivial in such cases, the matrix
$\RPM{A}$ still carries important information on the row and column rank
profiles of all leading sub-matrices of $A$.

\begin{corollary}\label{cor:elu}
  Let $A$ be an $m\times n$ matrix.

If $A$ has generic column rank profile, then any \plu decomposition $A=PLU$
    computed using reverse lexicographic order search and row rotations is such
    that 
    $\RPM{A}= P \begin{smatrix}I_r\\&0\end{smatrix}$. In particular,
    $P=\RPM{A}$ if $r=m$.

  If $A$ has generic row rank profile, then any \lup decomposition $A=LUP$
    computed using lexicographic order search and column rotations is such that 
    $\RPM{A}= \begin{smatrix}I_r \\ & 0\end{smatrix}P$. In particular,
    $P=\RPM{A}$ if $r=n$.
\end{corollary}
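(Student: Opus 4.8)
The plan is to prove the first statement (the \plu case) in detail and obtain the second by transposition: applying the first statement to $A^T$ works because $A$ has generic row rank profile iff $A^T$ has generic column rank profile, a lexicographic search on $A$ corresponds to a reverse lexicographic search on $A^T$, column rotations on $A$ correspond to row rotations on $A^T$, and $\RPM{A^T}=(\RPM{A})^T$ by the uniqueness in Theorem~\ref{def:rankprofilematrix}. So I concentrate on the first claim.

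First I would observe that a \plu decomposition is simply a \pluq decomposition with $Q=I_n$, so its pivoting matrix is $\Pi_{P,I_n}=P\begin{smatrix}I_r\\&0\end{smatrix}$. Hence the asserted identity $\RPM{A}=P\begin{smatrix}I_r\\&0\end{smatrix}$ is exactly the statement $\Pi_{P,Q}=\RPM{A}$, i.e.\ that the decomposition reveals the rank profile matrix. It therefore suffices to show that a reverse lexicographic search with row rotations and $Q=I_n$ reveals $\RPM{A}$ whenever $A$ has generic column rank profile.

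The key observation, and the step I expect to be the crux, is that generic column rank profile is precisely what forces the column permutation to be the identity. A reverse lexicographic search inspects columns from left to right and in the first nonzero column selects the topmost entry, so its pivots occur in strictly increasing column order and the sequence of pivot columns is $\CRP(A)$. Under the hypothesis $\CRP(A)=(1,\dots,r)$, the $j$-th pivot already lies in column $j$: the rotation that would move it to the diagonal is the identity, and taking $Q=I_n$ is consistent with the search. The care needed here is to see that this property is inherited along the elimination — equivalently, that the Schur complement $H$ again has generic column rank profile, so that the same reasoning applies at each level of the induction.

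Granting this, I would apply Theorem~\ref{th:RPandPerm} iteratively with $k=1$, exactly as in the lexicographic analysis of Section~\ref{sec:algo}. At the first step the pivot sits at $(\ell,1)$, the topmost nonzero of the first column (nonzero by genericity), so $\Pi_{P_1,Q_1}=E_{\ell,1}=\RPM{A_1}$ for the $m\times1$ leading submatrix $A_1$, which establishes clause~(\ref{clause:rpm1}); the row rotation gives $P_1^T=R_{1,\ell}$, whose last $m-1$ values $(1,\dots,\ell-1,\ell+1,\dots,m)$ are increasing, so $P_1^T$ is $1$-monotonically increasing, while $Q_1^T=I_n$ is trivially $1$-monotonically increasing, verifying clauses~(\ref{clause:Pmonotinc}) and~(\ref{clause:Qmonotinc}). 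Case~(\ref{th:RP:both}) then yields $\RPM{A}=\Pi_{P,Q}$ by induction on the elimination of $H$. Finally the ``in particular'' clause is immediate: when $r=m$ the factor $\begin{smatrix}I_r\\&0\end{smatrix}$ becomes $\begin{smatrix}I_m&0\end{smatrix}$, so $\RPM{A}=P\begin{smatrix}I_m&0\end{smatrix}$ collapses to $P$ up to trailing zero columns, and the symmetric \lup statement follows as described above.
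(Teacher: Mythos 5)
Your proposal is correct and follows essentially the same route as the paper: the paper simply cites Table~\ref{tab:RPRSearchPerm} (itself derived from Theorem~\ref{th:RPandPerm} applied iteratively, exactly as you do) to get $\Pi_{P,Q}=\RPM{A}$, and then observes, as you do, that generic column rank profile together with reverse lexicographic search forces $Q=I_n$. Your version merely unwinds that citation and adds the (correct) inheritance remark about the Schur complement and the transposition duality for the second statement, which the paper dispatches with ``the same reasoning holds.''
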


\begin{proof}
  Consider $A$ has generic column rank profile.
  From table~\ref{tab:RPRSearchPerm}, any \pluq decomposition algorithm with a reverse
  lexicographic order based search and rotation based row permutation is such
  that $\Pi_{P,Q}=P \begin{smatrix}  I_r\\& \end{smatrix}Q = \RPM{A}$. Since the
  search follows the reverse lexicographic order and the matrix has generic
  column rank profile, no column will be permuted in this elimination, and
  therefore $Q=I_n$.
  The same reasoning hold for when $A$ has generic row rank profile.
\end{proof}

Note that the $L$ and $U$ factors in a \plu decomposition are uniquely determined
by the permutation $P$. Hence, when the matrix has full row rank, $P=\RPM{A}$
and the decomposition $A=\RPM{A}LU$ is unique. Similarly the decomposition
$A=LU\RPM{A}$ is unique when the matrix has 
full column rank.
Now when the matrix is rank deficient with generic row rank profile, there is no
longer a unique \plu decomposition revealing the rank profile matrix: any
permutation applied to the last $m-r$ columns of $P$ and the last $m-r$ rows of
$L$ yields a \plu decomposition where $\RPM{A}=P \begin{smatrix} I_r\\& \end{smatrix}$. 

Lastly, we remark that the only situation where the rank profile matrix \RPM{A}
can be read directly as a sub-matrix of $P$ or $Q$ is as in
corollary~\ref{cor:elu}, when the matrix $A$ has generic row or column rank
profile. 
Consider a \pluq decomposition $A=PLUQ$ revealing the rank profile matrix
($\RPM{A}=P\begin{smatrix} I_r\\&\end{smatrix}Q$) such that $\RPM{A}$ is a
sub-matrix of $P$. This means that $P=\RPM{A}+S$ where $S$ has disjoint row and
column support with $\RPM{A}$.
We have
$  \RPM{A} = (\RPM{A}+S)  \begin{smatrix}    I_r\\&  \end{smatrix}  Q
          = (\RPM{A}+S)  \begin{smatrix}    Q_1\\0_{(n-r)\times
      n}  \end{smatrix}  
$.
Hence $\RPM{A}(I_n- \begin{smatrix}  Q_1\\0_{(n-r)\times n}  \end{smatrix}) =S \begin{smatrix}  Q_1\\0_{(n-r)\times n}  \end{smatrix}
$ but the row support of these matrices are disjoint, hence 
$\RPM{A}\begin{smatrix}  0\\I_{n-r}\end{smatrix}=0$ which implies that $A$ has
generic column rank profile.
Similarly, one shows that $\RPM{A}$ can be a sub-matrix of $Q$ only if $A$ has a
generic row rank profile.

\section{Improvements in practice}

In our previous contribution~\cite{DPS:2013}, we identified the ability to
recover the rank profile matrix via the use of the product order search and
of rotations.
Hence we proposed an implementation combining a tile recursive algorithm and an iterative
base case, using these search and permutation strategies.

The analysis of sections~\ref{sec:cond} and~\ref{sec:algo} shows that other
pivoting strategies can be used to compute the rank profile matrix, and preserve
the monotonicity. We present here a new base case algorithm and its
implementation over a finite field that we wrote in the \fflasffpack
library\footnote{FFLAS-FFPACK revision 1193,
  \url{http://linalg.org/projects/fflas-ffpack}, linked against OpenBLAS-v0.2.8.}. 
It is based on a
lexicographic order search and row
and column rotations. Moreover, the schedule of the update operations is that
of a Crout elimination, for it reduces the number of modular reductions, as
shown in~\cite[\S~3.1]{DPS:2014}.
Algorithm~\ref{alg:pluq:iter} summarizes this variant.
\begin{algorithm}[htbp]
  \caption{Crout variant of \pluq with lexicographic search and column rotations}
  \label{alg:pluq:iter}
\begin{algorithmic}[1]
\State $k\leftarrow 1$
\For{$i=1\dots m$}
   \State $A_{i,k..n}\leftarrow A_{i,k..n} - A_{i,1..k-1}\times A_{1..k-1,k..n}$
   \If{$A_{i,k..n} = 0$}
     \State Loop to next iteration
   \EndIf
   \State Let $A_{i,s}$ be the left-most non-zero element of row $i$.
   \State $A_{i+1..m,s}\leftarrow A_{i+1..m,s} - A_{i+1..m,1..k-1}\times A_{1..k-1,s}$
   \State $A_{i+1..m,s}\leftarrow A_{i+1..m,s} / A_{i,s}$
   \State Bring $A_{*,s}$ to $A_{*,k}$ by column rotation
   \State Bring $A_{i,*}$ to $A_{k,*}$ by row rotation
   \State $k\leftarrow k+1$
\EndFor
\end{algorithmic}
\end{algorithm}

\begin{figure}[H]
  \centering
  \includegraphics[width=\columnwidth]{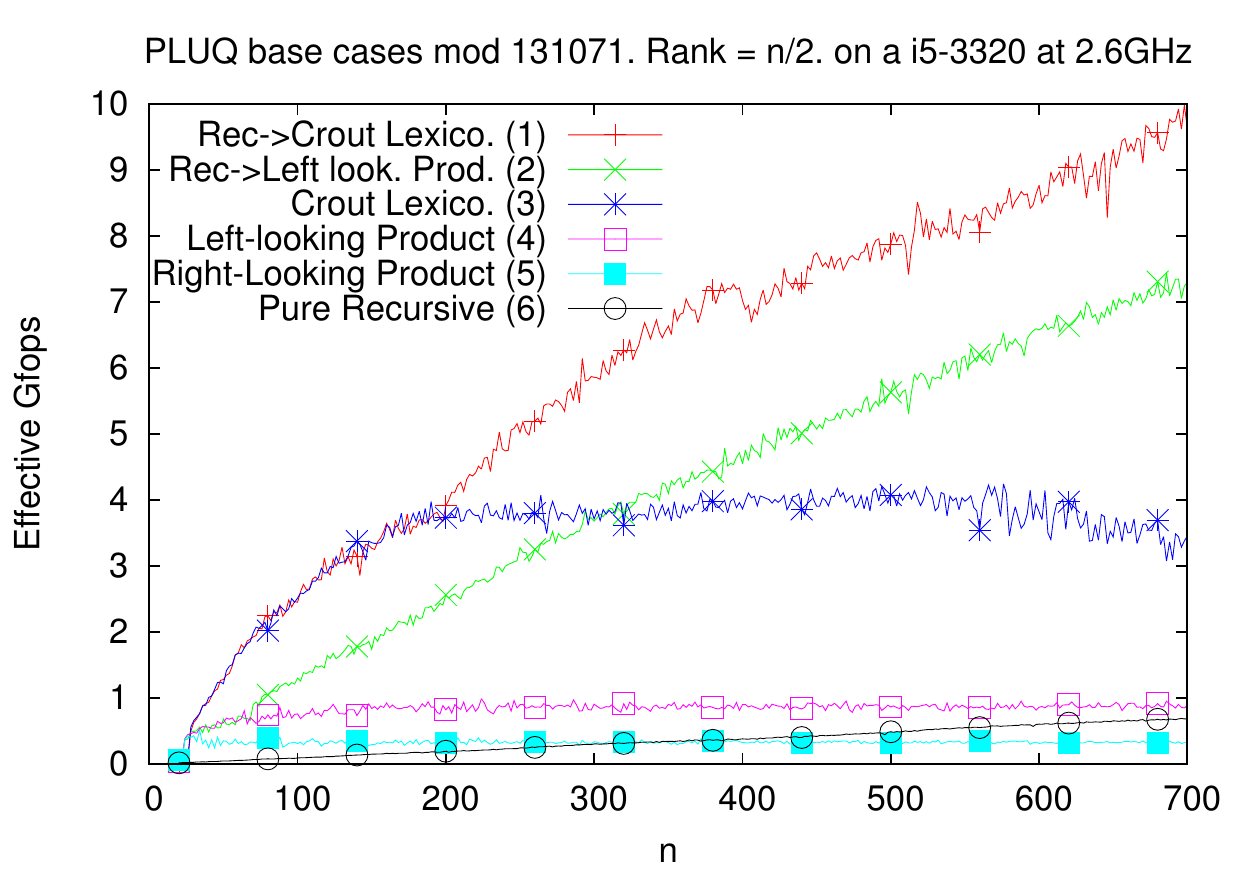}
  \includegraphics[width=\columnwidth]{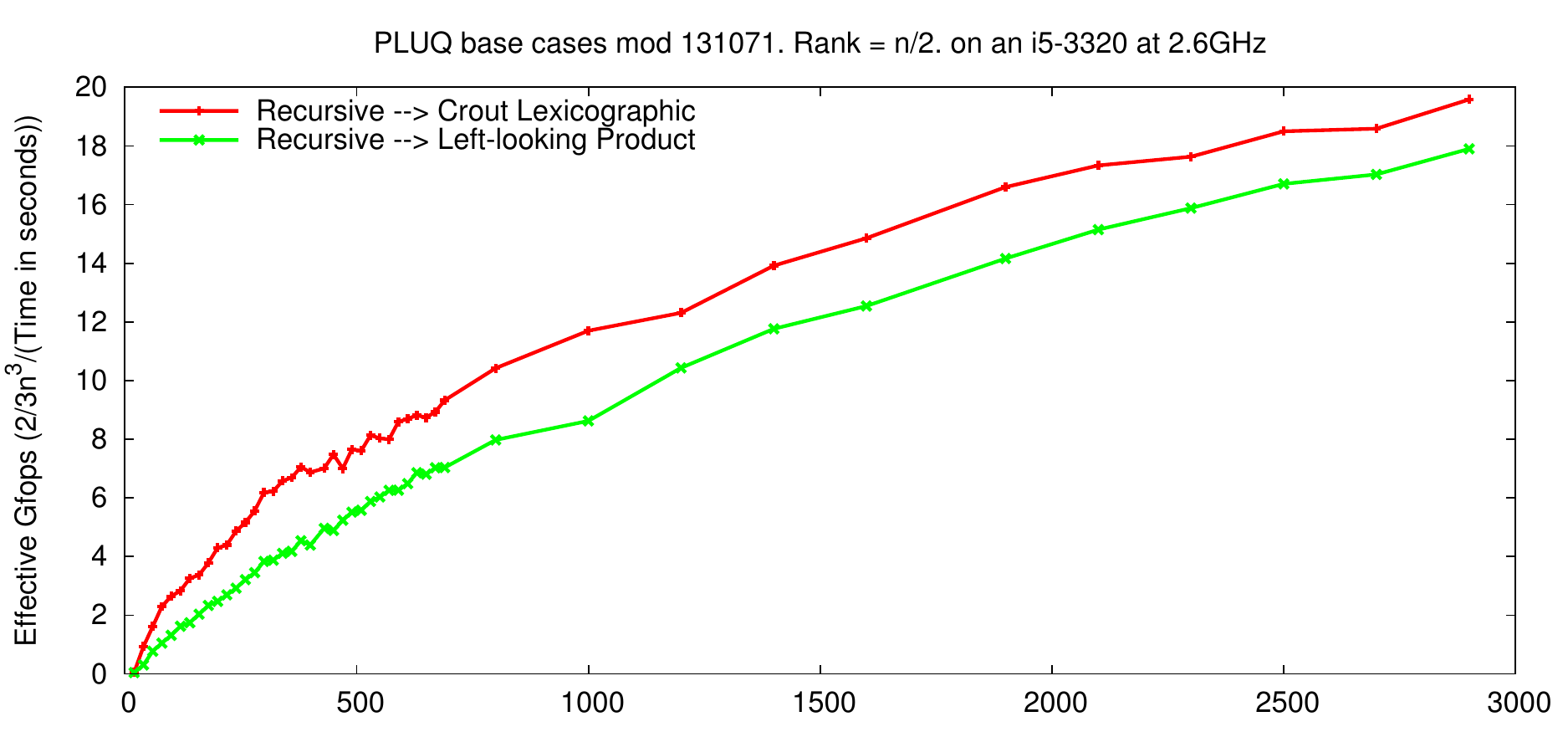}
  \caption{Computation speed of PLUQ decomposition base cases.}
  \label{fig:basecase}
\end{figure}

In the following experiments, we measured the real time of the computation
averaged over 10 instances (100 for $n< 500$) of
$n\times n$ matrices with rank $r=n/2$ for any even integer value of $n$ between 20
and 700. In order to ensure that the row and column rank profiles of these
matrices are uniformly random, we construct them as the product $A=L\RPM{}U$, where
$L$ and $U$ are random non-singular lower and upper triangular matrices and $\RPM{}$ is
an $m\times n$ $r$-sub-permutation matrix whose non-zero elements positions are chosen
uniformly at random.
The effective speed is obtained by dividing an estimate of the arithmetic cost
($2mnr+2/3r^3-r^2(m+n)$) by the computation time. 

Figure~\ref{fig:basecase} shows its computation speed (3),
compared to that of the pure recursive algorithm (6), and to our
previous base case algorithm~\cite{DPS:2013}, using a product order search, and either a
left-looking (4) or a right-looking (5) schedule. At $n=200$, the left-looking
variant (4) improves over the right looking variant (5) by a factor of about
$2.14$ as it performs fewer modular reductions. Then, the Crout variant (3)
again improves variant (4) by a factor of about 3.15. Lastly we also show the speed
of the final implementation, formed by the tile recursive algorithm cascading to either the
Crout base case (1) or the left-looking one (2). The threshold where the
cascading to the base case occurs is experimentally set to its optimum value,
i.e. 200 for variant (1) and 70 for variant (2). This illustrates that the
gain on the base case efficiency leads to a higher threshold, and
 improves the efficiency of the cascade implementation (by an additive
gain of about 2.2 effective Gfops in the range of dimensions considered).

\section{Computing Echelon forms}

Usual algorithms computing an echelon form~\cite{Storjohann:2000:thesis,JPS:2013} use a slab block decomposition (with
row or lexicographic order search), which implies that pivots appear in the
order of the echelon form. The column echelon form is simply obtained as $C=PL$
from the PLUQ decomposition.
Using product order search, this is no longer true, and the order of the columns
in $L$ may not be that of the echelon form. Algorithm~\ref{alg:echelon} shows
how to recover the echelon form in such cases.
\begin{algorithm}[htbp]
  \caption{Echelon form from a PLUQ decomposition}
  \label{alg:echelon}
\begin{algorithmic}[1]
\Require{$P,L,U,Q$, a PLUQ decomp. of $A$ with $\RPM{A}=\Pi_{P,Q}$}
\Ensure{$C$: the column echelon form of $A$}
\State $C\leftarrow PL$
\State $(p_1,..,p_r) =
\text{Sort}(\sigma_P(1),..,\sigma_P(r))$ \label{alg:line:sort}
\For{$i=1..r$}
  \State $\tau = (\sigma_P^{-1}(p_1),..,\sigma_P^{-1}(p_r), r+1,..,m)$
\EndFor
  \State $C\leftarrow C P_\tau$
\end{algorithmic}
\end{algorithm}
Note that both the row and the column echelon forms can thus be computed from
the same PLUQ decomposition.
Lastly, the column echelon form of the $i\times j$ leading
sub-matrix, is computed by removing rows of $PL$ below index $i$ and filtering out the pivots of column index greater than
$j$. The latter is achieved by replacing line~\ref{alg:line:sort} by
 $(p_1,..,p_s) = \text{Sort}( \{\sigma_P(i) : \sigma_Q(i)\leq j\})$.

 {
 \bibliographystyle{abbrvurl}
 \bibliography{pluq}
 }
\end{document}